\documentclass[11pt]{article}

\usepackage{amsmath}
\usepackage{amsfonts}
\usepackage{amsthm}
\usepackage{amssymb}
\usepackage[utf8]{inputenc}

\usepackage{textcomp}
\usepackage{cleveref}

\usepackage[margin=1in]{geometry}
\usepackage{indentfirst}


\usepackage{authblk}
\usepackage{abstract}

\usepackage{pbox}
\usepackage{comment}

\usepackage[linesnumbered, boxruled]{algorithm2e}
\setlength{\algomargin}{2em}

\usepackage{pgfplots}
\pgfplotsset{compat=1.4}
\usetikzlibrary{shapes}
\usetikzlibrary{plotmarks}

\usepackage{framed}
\usepackage{mdframed}
\usepackage{placeins}
\usepackage{subcaption}
\usepackage{afterpage}
\usepackage{hyperref}
\usepackage{courier}
\usepackage{algorithm2e}


\newcommand{\bee}{\mathcal{B}}
\newcommand{\dee}{\mathcal{D}}

\DeclareMathOperator{\dist}{dist}

\newtheorem{theorem}{Theorem}

\newtheorem{lemma}{Lemma}
\newtheorem{definition}{Definition}
\newtheorem{conjecture}{Conjecture}

\newtheorem*{definition*}{Definition}
\newtheorem*{theorem*}{Theorem}

\newtheorem*{lemma*}{Lemma}

\title{Optimal Vertex Fault Tolerant Spanners\\ (for fixed stretch)}
\author[1]{Greg Bodwin}
\author[2]{Michael Dinitz}
\author[1]{Merav Parter}
\author[1]{Virginia Vassilevska Williams}
\affil[1]{MIT CSAIL}
\affil[2]{Johns Hopkins University}
\date{}
\begin{document}
\maketitle
\thispagestyle{empty}
\begin{abstract}
A \emph{$k$-spanner} of a graph $G$ is a sparse subgraph $H$ whose shortest path distances match those of $G$ up to a multiplicative error $k$.
In this paper we study spanners that are resistant to faults.
A subgraph $H \subseteq G$ is an $f$ \emph{vertex fault tolerant (VFT)} $k$-spanner if $H \setminus F$ is a $k$-spanner of $G \setminus F$ for any small set $F$ of $f$ vertices that might ``fail.''
One of the main questions in the area is: what is the minimum size of an $f$ fault tolerant $k$-spanner that holds for all $n$ node graphs (as a function of $f$, $k$ and $n$)?
This question was first studied in the context of geometric graphs [Levcopoulos et al. STOC '98, Czumaj and Zhao SoCG '03] and has more recently been considered in general undirected graphs [Chechik et al. STOC '09, Dinitz and Krauthgamer PODC '11].

In this paper, we settle the question of the optimal size of a VFT spanner, in the setting where the stretch factor $k$ is fixed.
Specifically, we prove that every (undirected, possibly weighted) $n$-node graph $G$ has a $(2k-1)$-\emph{spanner} resilient to $f$ vertex faults with $O_k(f^{1 - 1/k} n^{1 + 1/k} )$ edges, and this is fully optimal (unless the famous Erd{\" o}s Girth Conjecture is false).
Our lower bound even generalizes to imply that no \emph{data structure} capable of approximating $\dist_{G \setminus F}(s, t)$ similarly can beat the space usage of our spanner in the worst case.
We note that all previous upper bounds carried an almost \emph{quadratic} (or worse) dependence on $f$, whereas the dependence in our tight bound is \emph{sublinear}.
To the best of our knowledge, this is the first instance in fault tolerant network design in which introducing fault tolerance to the structure increases the size of the (non-FT) structure by a sublinear factor in $f$.
Another advantage of this result is that our spanners are constructed by a very natural and simple greedy algorithm, which is the obvious extension of the standard greedy algorithm used to build spanners in the non-faulty setting.

We also consider the \emph{edge fault tolerant (EFT)} model, defined analogously with edge failures rather than vertex failures.
We show that the same spanner upper bound applies in this setting.
Our data structure lower bound extends to the case $k=2$ (and hence we close the EFT problem for $3$-approximations), but it falls to $\Omega(f^{1/2 - 1/(2k)} \cdot n^{1 + 1/k})$ for $k \ge 3$.
We leave it as an open problem to close this gap.
\end{abstract}

\pagebreak
\pagenumbering{arabic}

\section{Introduction}

A {\em spanner} (\cite{PelegS:89,PelegU:89}) of a graph is a subgraph that approximately preserves its shortest path metric.
More formally, a subgraph $H = (V, E' \subseteq E)$ is a $t$-spanner of a graph $G= (V, E)$ if
$$\dist_H(u,v) \leq t \cdot \dist_G(u,v) \quad \text{ for all } u,v \in V$$
($t$ is called the \emph{stretch} of the spanner).
Spanners were introduced by Peleg and Ullman~\cite{PelegU:89} and Peleg and Sch{\"{a}}ffer~\cite{PelegS:89}, and have a wide range of applications in routing \cite{PelegU:89-routing}, synchronizers \cite{awerbuch1990network}, broadcasting \cite{awerbuch1991cient,peleg2000distributed}, distance oracles \cite{thorup2005approximate}, graph sparsifiers \cite{kapralov2012spectral}, and even preconditioning of linear systems \cite{elkin2008lower}. 
The most common objective in spanners research is to achieve the best possible existential size-stretch trade-off.
Most notably, a landmark result of Alth\"ofer et al.~\cite{AlthoferDDJS:93} proved that for any integer $k \geq 1$, every graph $G=(V, E)$ has a $(2k-1)$-spanner $H \subseteq G$ with $O(n^{1+1/k})$ edges, and moreover, there exist graphs for which this size-stretch tradeoff cannot be improved (if we assume the girth conjecture of Erd\H{o}s \cite{erdHos1964extremal}). 
In fact, their existentially optimal upper bound was obtained via an extremely simple and natural greedy construction algorithm:
consider the edges of $G$ in non-decreasing order of their weight and add an edge $\{u,v\}$ to the current spanner $H$ if and only if $\dist_H(u,v)>(2k-1) w(u,v)$.
It is easy to verify that this algorithm never creates cycles of length $2k$ or less in $H$, and simple folklore upper bounds imply that any graph of girth $> 2k$ has $O(n^{1+1/k})$ edges. 

A crucial aspect of real-life systems that is not captured by the standard notion of spanners is the possibility of failure.
If some edges (e.g., communication links) or vertices (e.g., computer processors) fail, what remains of the spanner might not still approximate the distances of what remains of the original graph.
This motivates the notion of \emph{fault-tolerance} for spanners.
The canonical model was first introduced by Levcopoulos, Narasimhan, and
Smid \cite{levcopoulos1998efficient} (who happened to work in the geometric setting):
a subgraph $H$ is an $f$ vertex (edge) fault tolerant $t$-spanner for $G$ if
$$\dist_{H \setminus F}(u,v)\leq t \cdot \dist_{G \setminus F}(u,v) \quad \text{ for every }u,v \in V \text{ and } F \subseteq V (F \subseteq E), |F|\leq f.$$
In other words, a fault tolerant spanner $H$ contains a spanner for $G\setminus F$ for every set $F$ of $f$ nodes/edges that could fail.

The question of whether it is possible to construct a sparse fault tolerant
spanner for an arbitrary undirected weighted graph (rather than a geometric graph) was raised by Czumaj and Zhao~\cite{czumaj2004fault}.
This was answered in the affirmative by Chechik, Langberg, Peleg and Roditty \cite{ChechikLPR:10} who gave the first results on fault-tolerant spanners for general graphs. 
They presented constructions of an $f$-vertex fault tolerant $(2k-1)$-spanner of an $n$-node graph $G$ of size $O(f^2 k^{f+1} \cdot n^{1+1/k}\log^{1-1/k}n)$, and an
$f$-edge fault tolerant $(2k-1)$-spanner of $G$ of size $O(f\cdot n^{1+1/k})$.  Hence, Chechik et al.~\cite{ChechikLPR:10} showed that introducing tolerance to $f$ edge faults costs us an extra factor of $f$ in the size of the spanner, while introducing tolerance to $f$ vertex faults costs us a factor of $f^2 k^{f+1}$ in the size (compared to the size of a non-fault tolerant spanner of the same stretch).  
Dinitz and Krauthgamer~\cite{DinitzK:11} later improved the edge bound for vertex faults to $O\left(f^{2-\frac{1}{k}}n^{1+\frac{1}{k}}\log{n}\right)$.

While this prior work provided a significant lead on the problem, it left behind two interesting knowledge gaps.
\begin{enumerate}
\item It is open whether these dependencies are ``right;'' no nontrivial lower bounds for the problem are yet published, and so improvements to these upper bounds are conceivable.
For example, it is natural to wonder: {\em is it possible to pay only an extra factor of $f$ and still achieve vertex fault-tolerance (as is possible for edge failures)?  Even more fundamentally: can we pay even \emph{less} than $f$ and still achieve $f$-fault tolerance?}

\item One of the major pros of the textbook (non-faulty) greedy spanner construction of Alth\"ofer et al.~\cite{AlthoferDDJS:93} is the simplicity and obvious correctness of the algorithm.
The prior work on fault tolerant spanners exhibits new and interesting techniques, but cannot reasonably be viewed as an analog or extension of the classic greedy spanner.
It is thus open whether a similar degree of algorithmic simplicity can be achieved in the fault tolerant model.
\end{enumerate}

In this paper, we present new upper and lower bounds for fault tolerant spanners that directly address both of these issues.

\paragraph{Our Contribution.}
We answer the above questions affirmatively.
First, we show that we can construct vertex and edge-fault tolerant spanners that cost only $o(f)$ more than their non-fault tolerant counterparts.  
\begin{theorem} [Main Result, Upper Bound] \label{thm:upper-bounds}
Let $G = (V, E, w)$ be an undirected graph with real edge weights and no negative-weight cycles.
Let $k\geq 1$ be a fixed integer.
For any (possibly non-constant) positive integer $f$, $G$ has an $f$-VFT $(2k-1)$-spanner on $O_k(f^{1 - 1/k} n^{1 + 1/k})$ edges. 
The same bounds can be achieved for $f$-EFT spanners.
\end{theorem}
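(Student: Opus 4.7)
The plan is to analyze the natural fault-tolerant analog of the classical greedy spanner of Alth\"ofer et al. Process the edges of $G$ in non-decreasing weight order, and add $e = (u,v)$ to the current spanner $H$ if and only if there exists a set $F \subseteq V \setminus \{u,v\}$ with $|F|\le f$ and $\dist_{H \setminus F}(u,v) > (2k-1)\,w(e)$. Correctness of the FT spanner property is then immediate by construction: for any edge $(u,v)$ skipped by this rule and any admissible $F$, we have $\dist_{H \setminus F}(u,v) \leq (2k-1)\,w(u,v)$, and the standard (non-FT) spanner argument applied fault-set-by-fault-set lifts this local guarantee to all vertex pairs, so $H$ is an $f$-VFT $(2k-1)$-spanner. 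The nontrivial task is bounding $|E(H)|$.

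The analysis rests on a \emph{fault-tolerant girth-like} structural property: for each edge $e=(u,v)\in H$, there exist $f+1$ pairwise internally vertex-disjoint $u$-$v$ paths in $H$, each of hop-length at most $2k-1$ (equivalently, of weighted length at most $(2k-1)\,w(e)$). The case $f=0$ recovers the familiar girth $>2k$ property of the non-FT greedy spanner. For general $f$, the intuition is a length-bounded Menger-type duality: if fewer than $f+1$ such disjoint paths existed, one could extract a vertex cut $F$ of size $\leq f$ (avoiding $u,v$) that blocks every short $u$-$v$ path in $H$, contradicting the FT guarantee applied to the edge $(u,v)\in G$. A subtlety is that bounded-length vertex Menger is not tight in general, so a rigorous proof will likely need to exploit the greedy's incremental structure---for instance, the witness fault sets $F_e$ chosen at each insertion---via an inductive or exchange argument.

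Given the structural property, we bound $|E(H)|$ by a random vertex sampling argument. Sample each vertex of $V$ independently into $V_p$ with probability $p := c/f$ for a sufficiently small constant $c$, and let $H_p := H[V_p]$. For each edge $e=(u,v)\in H$ with $u,v\in V_p$, a union bound over its $f+1$ disjoint alternative paths (each with at most $2k-2$ internal vertices, hence surviving $V_p$ with probability at most $p$) shows that with constant probability none of them survives, so $e$ closes no short cycle in $H_p$ through the structural family. After removing a small fraction of edges to break any residual short cycles, the remaining subgraph of $H_p$ has girth $>2k$ and hence, by the Moore bound, contains $O(|V_p|^{1+1/k})$ edges. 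Combining $E[|E(H_p)|] = p^2 |E(H)|$ with $E[|V_p|^{1+1/k}] = O((pn)^{1+1/k})$ and rearranging yields $|E(H)| = O(p^{1/k-1} n^{1+1/k}) = O(f^{1-1/k} n^{1+1/k})$, as claimed. The EFT version follows from the same blueprint, with internally vertex-disjoint paths replaced by edge-disjoint ones and vertex sampling replaced by edge sampling.

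The \textbf{main obstacle} is the structural lemma. Because length-bounded vertex Menger is not exact in general, the sketched duality argument must be replaced by a more careful proof specific to the FT greedy, most plausibly inductive and tracking the insertion-time witness fault sets. A secondary technicality arises in the weighted setting, where a short weighted $u$-$v$ path can traverse many hops and the sampling step may require a weight-bucketing refinement to control the number of internal vertices along each alternative path.
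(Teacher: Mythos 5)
Your proposed structural lemma — that every edge $e=(u,v)\in H$ participates in $f+1$ internally vertex-disjoint $u\leadsto v$ paths in $H$ of hop-length at most $2k-1$ — is false, and the Menger-duality argument you sketch for it does not apply. The flaw in the duality argument: the fault-tolerance guarantee applied to a pair $(u,v)$ with $(u,v)\in E_H$ gives you nothing, because for any vertex fault set $F$ avoiding $u,v$ the edge $(u,v)$ itself is still present in $H\setminus F$, so $\dist_{H\setminus F}(u,v)=w(u,v)$ and the inequality $\dist_{H\setminus F}(u,v)\le(2k-1)\dist_{G\setminus F}(u,v)$ is satisfied trivially. There is no contradiction to be had; the FT property only constrains pairs whose edge was \emph{discarded}. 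A concrete counterexample to the lemma: take $G$ unweighted with girth $2k+2$ and $\Theta(n^{1+1/k})$ edges (a girth-conjecture extremal graph). The FT greedy adds every edge (the empty fault set already witnesses $\dist_{H}(u,v)>2k-1$ before insertion), so $H=G$. Then for any edge $(u,v)\in H$ the \emph{only} $u\leadsto v$ path of hop-length $\le 2k-1$ is the edge itself, so for $f\ge 1$ there are certainly not $f+1$ disjoint short paths. You noticed the non-exactness of length-bounded Menger as a technical obstacle, but the deeper problem is directional: what the greedy actually gives you (at insertion time, in $H^{(u,v)}$) is the \emph{existence} of a small blocking set $F$, which is an \emph{upper}-bound condition on disjoint short paths, exactly the opposite of what your structural lemma asserts.

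This also breaks the downstream sampling argument even if the lemma were granted: you only rule out the $f+1$ paths in the chosen ``structural family,'' yet the subsequent ``remove a small fraction of edges to break residual short cycles'' step presupposes that there are few other short cycles in $H_p$, which is unjustified. For comparison, the paper goes the other way: after reducing to the approximately-regular case, it \emph{lower}-bounds the number of closed $2k$-walks in $H$ via Cauchy--Schwarz (a density statement) and \emph{upper}-bounds it by exploiting precisely that every short $u\leadsto v$ walk existing at insertion time must pass through a blocking set $F$ of size $\le f$, which caps the number of $u\leadsto v$ walks that the new edge can close. For $k=2$ this gives $c_4=O(|E_H|\cdot fD)$ directly; for $k\ge 3$ one needs the blockade/walk-meet machinery to propagate this bound inductively. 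The vertex-sampling idea is a natural instinct, but the structural input it needs is not available here, and I don't see how to repair it without in effect reinventing the paper's walk-counting approach.
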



Interestingly, the construction algorithm behind Theorem \ref{thm:upper-bounds} (Algorithm \ref{alg:spanners}) is indeed the natural generalization of Alth\"ofer et al.~\cite{AlthoferDDJS:93}: we simply consider the edges of the graph in non-decreasing order of weight, and we add an edge $(u, v)$ to the current spanner $H$ if and only if $\dist_{H \setminus F}(u, v) > (2k-1)w(u, v)$ for any possible fault set $F$.
Correctness of the algorithm is once again trivial.
However, this time it is highly nontrivial to prove an upper bound the density of the final spanner $H$, and it is the main endeavor of this paper to establish this.

The $O_k$ in the sparsity bound in Theorem \ref{thm:upper-bounds} hides a $2^{O(k)}$ factor (which is a constant for fixed $k$).
We leave it as an open problem to determine whether this exponential dependence on $k$ is necessary.

To the best of our knowledge, Theorem~\ref{thm:upper-bounds} is the first construction of any fault tolerant graph structure whose size has a \emph{sublinear} dependence in $f$. Perhaps the most basic fault tolerant structures are those preserving connectivity, where the connected components of $H \setminus F$ are the same as for $G \setminus F$ for any fault set $F$ of size at most $f$.  Even for this much simpler requirement, the existing constructions of FT connected subgraphs \cite{parter2016fault} pay a factor of at least $f$ in the size compared to the size of the non-fault tolerant structure. 
In fact, a common belief in fault tolerant network design is that paying a factor $f$ in the size of the fault tolerant structure is the best one can achieve \cite{parter2016fault}.  Showing that this belief is wrong is a core contribution of this paper. We hope that it will motivate finding other graph predicates where requiring fault tolerance only costs us a sublinear factor of $f$ compared to the size of the non-faulty structure. 

For the case of vertex faults, we complement Theorem~\ref{thm:upper-bounds} with a matching lower bound: 
\begin{theorem} [Lower Bound, Vertex Faults] \label{thm:vertex-lbs}
For any positive integers $k$ and $f$, assuming the Erd\"{o}s Girth Conjecture \cite{erdHos1964extremal}, there exist infinite families of undirected unweighted $n$-node graphs on $\Omega(f^{1-1/k}\cdot n^{1 + 1/k})$ edges for which any $f$ VFT $(2k-1)$-spanner $H$ must contain all the edges of $G$. 
\end{theorem}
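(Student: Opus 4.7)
The plan is to invoke the Erd\H{o}s Girth Conjecture to obtain an unweighted $N$-vertex graph $G_0$ of girth at least $2k+2$ with $\Omega(N^{1+1/k})$ edges, and then build $G$ as a vertex blowup of $G_0$. Concretely, set $t := \lfloor f/2 \rfloor + 1$, replace each $v \in V(G_0)$ with a cluster $S_v$ of $t$ fresh copies, and for each edge $(u,v) \in E(G_0)$ add the complete bipartite graph between $S_u$ and $S_v$. This yields $|V(G)| = tN =: n$ and $|E(G)| = t^2 \cdot |E(G_0)| = \Theta(t^{1-1/k} n^{1+1/k}) = \Theta(f^{1-1/k} n^{1+1/k})$, matching the claimed edge count. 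Since the conjecture supplies $G_0$ for arbitrarily large $N$, this produces an infinite family.

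The core step is to verify that every edge of $G$ must lie in any $f$-VFT $(2k-1)$-spanner. I would fix an edge $e = (u', v')$ with $u' \in S_u$, $v' \in S_v$ and use the fault set
\[
F_e := (S_u \setminus \{u'\}) \cup (S_v \setminus \{v'\}),
\]
whose size is exactly $2(t-1) \leq f$ by the choice of $t$. The set $F_e$ collapses $S_u$ and $S_v$ to the singletons $\{u'\}$ and $\{v'\}$ in $G \setminus F_e$ while leaving $e$ intact, so $\dist_{G \setminus F_e}(u', v') = 1$; it therefore suffices to show $\dist_{(G \setminus F_e) \setminus \{e\}}(u', v') \geq 2k+1$, which forces stretch $> 2k-1$ and the inclusion of $e$. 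For this I plan to use a walk-projection argument: any simple path $u' = x_0, \ldots, x_\ell = v'$ in $(G \setminus F_e) \setminus \{e\}$ induces a walk $u = y_0, \ldots, y_\ell = v$ in $G_0$ via $x_s \in S_{y_s}$. Since only $u'$ survives in $S_u$ and only $v'$ in $S_v$, simplicity forces $y_s \neq u$ for $s > 0$ and $y_s \neq v$ for $s < \ell$; the walk also cannot traverse the $G_0$-edge $(u,v)$ at any step, because the corresponding lifted step would have to be $e$ itself. Thus the induced walk is a $u$-to-$v$ walk in $G_0 \setminus \{(u,v)\}$, whose length is at least $2k+1$ by the girth-$(2k+2)$ property of $G_0$.

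The hard part will be motivating the blowup factor $t \approx f/2$ rather than the tempting $t = f$ (which attains the same asymptotic density). With $t = f$, the set $F_e$ above has size $2(f-1) > f$ for $f \geq 3$, and a Menger-style count of the roughly $2(f-1)$ internally vertex-disjoint length-$3$ alternative paths between $u'$ and $v'$ (via either another copy in $S_v$ bouncing through a neighbor of $v$, or another copy in $S_u$ bouncing through a neighbor of $u$) shows that no $f$-size fault set can isolate $e$ in that regime. Shrinking to $t = \lfloor f/2 \rfloor + 1$ is the precise adjustment that keeps $|F_e|$ within the fault budget while sacrificing only a constant factor in edge count, since the blowup's edge density scales as $t^{1-1/k} n^{1+1/k}$.
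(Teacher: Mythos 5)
Your proof is correct and essentially identical to the paper's: the same blowup construction with roughly $f/2$ copies per vertex, the same fault set $F$ that isolates the two endpoints by failing all other copies of $u$ and $v$, and the same projection of any surviving $u'$--$v'$ path to a walk in the base graph $G_0$ avoiding the edge $(u,v)$, which must have length at least $2k+1$ by the girth bound. The only cosmetic difference is your choice $t = \lfloor f/2 \rfloor + 1$ versus the paper's $t = \lceil f/2 \rceil$, which does not affect the argument.
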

The Girth Conjecture is widely believed and widely used as a basis for lower bounds in spanners research.
It has been confirmed for $k \in \{1, 2, 3, 5\}$ \cite{wenger1991extremal} (and thus our lower bounds are unconditional for these values of $k$), and it is open for all other values of $k$.

For edge fault-tolerance we can prove the same lower bound for the special case of $k=2$, but for larger stretch values we give a weaker lower bound.
We leave it as an open question to close this gap for $k \ge 3$.
\begin{theorem} [Lower Bound, Edge Faults] \label{thm:edge-lbs}
For any positive integers $k$ and $f$, assuming the Erd\"{o}s Girth Conjecture \cite{erdHos1964extremal}, there exist infinite families of undirected unweighted $n$-node graphs on
\[
\begin{cases}
\Omega \left( f^{1/2} n^{3/2} \right) & \text{ when } k=2 \\
\Omega \left( f^{1/2 - 1/(2k)} n^{1 + 1/k} \right) & \text{ when } k \ge 3\\
\end{cases}
\]
edges for which any $f$ EFT $(2k-1)$-spanner $H$ must contain all the edges of $G$.
\end{theorem}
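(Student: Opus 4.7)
My plan is to prove both cases via a single construction, the \emph{vertex blowup} of an extremal high-girth graph, with the blowup parameter and fault set tuned separately for $k=2$ and $k\geq 3$. Fix a graph $G_0$ on $n_0$ vertices with girth $>2k$ and $\Omega(n_0^{1+1/k})$ edges, guaranteed by the Erd\H{o}s Girth Conjecture. For a parameter $s\geq 1$, let $G=G_0[s+1]$ be the graph obtained by replacing each vertex $x$ of $G_0$ by $s+1$ copies $x_0,\ldots,x_s$, and each edge $(u,v)$ of $G_0$ by the complete bipartite graph between the copies of $u$ and $v$. Then $G$ has $n=(s+1)n_0$ vertices and $\Omega((s+1)^{2}\, n_0^{1+1/k}) = \Omega((s+1)^{1-1/k}\,n^{1+1/k})$ edges. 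Let $\pi$ be the natural projection $V(G)\to V(G_0)$; every walk in $G$ projects to a walk in $G_0$ of the same length. For a fixed $G_0$-edge $(u,v)$, call its $(s+1)^2$ blowup copies the \emph{central edges of $(u,v)$}.

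For $k\geq 3$, set $s+1 = \lfloor\sqrt{f+1}\rfloor$, so that $(s+1)^2-1\leq f$. To show each blowup edge $(u_i,v_j)$ is forced, take $F$ to be the $(s+1)^2-1$ central edges of $(u,v)$ other than $(u_i,v_j)$ itself, giving $|F|\leq f$. Suppose an $f$-EFT $(2k-1)$-spanner $H$ omits $(u_i,v_j)$; then any shortest $u_i$-to-$v_j$ path $P$ in $H\setminus F$ avoids $(u_i,v_j)$ (not in $H$) and avoids every other central edge of $(u,v)$ (removed by $F$). Hence $\pi(P)$ is a walk from $u$ to $v$ of length $|P|$ in $G_0\setminus\{(u,v)\}$. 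The standard fact that deleting one edge of a girth-$g$ graph leaves its endpoints at distance $\geq g-1$ yields $|P|\geq 2k > 2k-1$, contradicting $\dist_{H\setminus F}(u_i,v_j)\leq (2k-1)\cdot\dist_{G\setminus F}(u_i,v_j)=2k-1$. Plugging $s+1\asymp\sqrt{f}$ into the edge count gives $\Omega(f^{1/2-1/(2k)}\,n^{1+1/k})$ forced edges.

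For $k=2$ the approach above would only yield $\Omega(f^{1/4}\,n^{3/2})$, so I instead take $s=\lfloor f/2\rfloor$ and use the smaller fault set $F=\{(u_i,v_\ell):\ell\neq j\}\cup\{(u_{\ell'},v_j):\ell'\neq i\}$ of $2s\leq f$ edges. The key claim is that no length-$\leq 3$ path $P$ from $u_i$ to $v_j$ in $G\setminus F$ uses \emph{any} central edge of $(u,v)$. Indeed, $F$ removes every central edge of $(u,v)$ incident to $u_i$ or $v_j$ other than $(u_i,v_j)$ itself, so $P$'s first and last edges are non-central. For $P$ of length $3$, a central middle edge would require an interior vertex of $P$ to be a copy of $u$ or $v$ --- but since blowup edges between copies of a single vertex do not exist and the edges joining $u_i$ (resp.\ $v_j$) to other copies of $v$ (resp.\ $u$) all lie in $F$, no such interior vertex fits on a valid length-$3$ path. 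Hence $\pi(P)$ is a walk from $u$ to $v$ of length $\leq 3$ in $G_0\setminus\{(u,v)\}$; but girth $>4$ forces this distance to be $\geq 4$, a contradiction. With $s\asymp f$, the edge count is $(s+1)^{1/2}\,n^{3/2}=\Omega(f^{1/2}\,n^{3/2})$.

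The main technical point is the $k=2$ case analysis showing that the cheap $2s$-edge fault set already kills every length-$\leq 3$ alternative; this reduction leverages girth $>4$ in its strongest form. For $k\geq 3$ the analogous cheap fault set would \emph{fail}: backtracking walks of length $\leq 2k-1$ in $G_0$ yield alternative blowup paths that legitimately traverse central edges of $(u,v)$ far away from $u_i$ and $v_j$, and these alternatives cannot all be killed by only $2s$ edge faults --- which is precisely why the theorem records the weaker $f^{1/2-1/(2k)}$ bound for $k\geq 3$ and leaves closing the gap to the $f^{1-1/k}$ VFT bound as an open problem.
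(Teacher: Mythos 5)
Your proposal matches the paper's proof almost exactly: both use the same vertex-blowup construction applied to a girth-conjecture graph, the same two fault sets (the roughly $f$-edge ``double star'' incident to $(u,i)$ and $(v,j)$ for $k=2$, and the full complete-bipartite block of central edges minus the one surviving edge for $k\ge 3$), the same parameter choices ($\Theta(f)$ copies for $k=2$, $\Theta(\sqrt{f})$ copies for $k\ge 3$), and the same projection-to-$G_0$ plus girth contradiction. Your concluding remark explaining \emph{why} the cheap double-star fault set fails for $k\ge 3$ is a nice piece of intuition that the paper also sketches in its overview section, and the whole argument is correct.
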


In fact, by applying some standard tricks to our new lower bound constructions, we can generalize to prove \emph{strong incompressibility} theorems:
\begin{theorem} [Strong Incompressibility, Vertex Faults] \label{thm:vertex-incomp}
For any positive integers $k \ge 2$ and $f$, assuming the Erd{\" o}s Girth Conjecture \cite{erdHos1964extremal}, there is no algorithm that can process $n$-node graphs $G = (V, E)$ into a data structure $\dee_G$ on $o\left(f^{1 - 1/k} n^{1 + 1/k}\right)$ bits such that $\dee_G$ can answer queries (in any amount of time) of the form $(s, t, F)$, where $s, t \in V, F \subseteq V, |F| \le f$, with a value $\widehat{\dist_{G \setminus F}}(s, t)$ satisfying
$$\dist_{G \setminus F}(s, t) \le \widehat{\dist_{G \setminus F}}(s, t) \le (2k-1) \cdot \dist_{G \setminus F}(s, t).$$
\end{theorem}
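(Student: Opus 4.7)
}
The plan is a standard encoding/incompressibility argument built on top of the extremal graph produced by Theorem \ref{thm:vertex-lbs}. Let $G = (V,E)$ be the $n$-node hard instance from that theorem, with $m := |E| = \Omega(f^{1-1/k} n^{1+1/k})$. The key observation is that the proof of Theorem \ref{thm:vertex-lbs} does not merely show that every $f$-VFT $(2k-1)$-spanner must retain all $m$ edges; it shows this by exhibiting, for each edge $e = (u,v) \in E$, an explicit ``witness'' fault set $F_e \subseteq V \setminus \{u,v\}$ with $|F_e| \le f$ such that in $G \setminus F_e$ every $u$--$v$ path other than the edge $e$ itself has length at least $2k$ (this is exactly what forces $e$ into the spanner). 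I would begin the proof by extracting this explicit witness set family $\{F_e\}_{e \in E}$ from the construction.

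Next, I would set up the encoding. Fix the graph $G$ and the family $\{F_e\}$ as public information (not charged to the data structure). For an arbitrary subgraph $G' = (V, E')$ with $E' \subseteq E$, suppose we have produced a data structure $\dee_{G'}$ of size $s$ bits approximating distances in $G'$ up to factor $(2k-1)$ under vertex faults of size $\le f$. I claim $\dee_{G'}$ determines $E'$: for each edge $e = (u,v) \in E$, issue the query $(u, v, F_e)$. If $e \in E'$, then $u,v \notin F_e$ implies $\dist_{G' \setminus F_e}(u,v) = 1$, so the returned value is at most $2k-1$. If $e \notin E'$, then $G' \setminus F_e$ is a subgraph of $(G \setminus e) \setminus F_e$, so by the witness property $\dist_{G' \setminus F_e}(u,v) \ge 2k$, forcing the returned value to be at least $2k$. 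Thus a single query per edge recovers the indicator of $E' \cap \{e\}$, and hence $E'$ in its entirety.

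Finally I would close with a counting step. Since $E'$ ranges freely over the $2^m$ subsets of $E$, and each must produce a distinct $\dee_{G'}$ (otherwise two different $E'$'s would be decoded identically by the procedure above), we need $s \ge m = \Omega(f^{1-1/k} n^{1 + 1/k})$ in the worst case, contradicting the assumed $o(f^{1-1/k} n^{1+1/k})$ bound. The proof is unconditional once the Girth Conjecture is assumed (to produce $G$), and is oblivious to query time since the decoder makes only $m$ queries but has no time constraint.

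The main obstacle, and the only non-routine point, is verifying that the lower bound construction in Theorem \ref{thm:vertex-lbs} really does come with per-edge witness fault sets $F_e$ avoiding the endpoints $\{u,v\}$ and killing all alternative short $u$--$v$ paths. I would need to revisit that proof to state this as a lemma (or strengthen its statement) before invoking it here; once available, the incompressibility argument above is essentially formal.
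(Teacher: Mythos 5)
Your proposal is correct and matches the paper's proof in all essentials: the paper also takes the hard instance $G^*$ from Theorem~\ref{thm:vertex-lbs}, considers all $2^{|E^*|}$ subgraphs, and shows they must all yield distinct data structures by querying $((u,i),(v,j),F)$ with the same per-edge fault set $F = \{(u,\ell):\ell\neq i\}\cup\{(v,\ell):\ell\neq j\}$ used in the lower bound proof, which indeed avoids both endpoints and forces $\dist_{G'\setminus F}\ge 2k+1$ when the edge is absent (a bit stronger than your $\geq 2k$, but both suffice). The concern you flag at the end is already resolved by the construction in Theorem~\ref{thm:vertex-lbs}, which explicitly supplies exactly such a witness set per edge.
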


\begin{theorem} [Strong Incompressibility, Edge Faults] \label{thm:edge-incomp}
For any positive integers $k \ge 2$ and $f$, assuming the Erd{\" o}s Girth Conjecture \cite{erdHos1964extremal}, there is no algorithm that can process $n$-node graphs $G = (V, E)$ into a data structure $\dee_G$ on
\[
\begin{cases}
o \left( f^{1/2} n^{3/2} \right) & \text{ when } k=2 \\
o \left( f^{1/2 - 1/(2k)} n^{1 + 1/k} \right) & \text{ when } k \ge 3\\
\end{cases}
\]
bits such that $\dee_G$ can answer queries (in any amount of time) of the form $(s, t, F)$, where $s, t \in V, F \subseteq E, |F| \le f$, with a value $\widehat{\dist_{G \setminus F}}(s, t)$ satisfying
$$\dist_{G \setminus F}(s, t) \le \widehat{\dist_{G \setminus F}}(s, t) \le (2k-1) \cdot \dist_{G \setminus F}(s, t).$$
\end{theorem}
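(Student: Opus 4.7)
The plan is to reduce from \Cref{thm:edge-lbs} via a standard encoding argument: any sufficiently small data structure must fail to distinguish between two subgraphs of the hard example, and therefore must err on some query. The first step is to revisit the proof of \Cref{thm:edge-lbs} and extract the following \emph{local} essentiality property of the hard graph $G$: for every edge $e = \{u,v\} \in E(G)$ there is a fault set $F_e \subseteq E(G) \setminus \{e\}$ with $|F_e| \leq f$ such that $\dist_{G \setminus F_e}(u,v) = 1$ while $\dist_{(G \setminus \{e\}) \setminus F_e}(u,v) \geq 2k$. Intuitively this is the very obstruction that prevents $G \setminus \{e\}$ from being an $f$-EFT $(2k-1)$-spanner of $G$, witnessed at the pair $(u,v)$ itself.

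Next I would consider the family $\mathcal{F}$ of all $2^{|E(G)|}$ edge-subgraphs of $G$ and argue that any two distinct $G'_1, G'_2 \in \mathcal{F}$ must yield distinct data structures $\dee_{G'_1} \neq \dee_{G'_2}$ as bit strings. Picking an edge $e = \{u,v\} \in E(G'_2) \setminus E(G'_1)$ (by symmetry) and querying $(u, v, F_e)$: in $G'_2$ the edge $e$ survives $F_e$, so $\dist_{G'_2 \setminus F_e}(u,v) = 1$ and any valid estimate lies in $[1, 2k-1]$; in $G'_1$ the containment $G'_1 \setminus F_e \subseteq (G \setminus \{e\}) \setminus F_e$ yields $\dist_{G'_1 \setminus F_e}(u,v) \geq 2k$, forcing any valid estimate to be at least $2k$. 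These intervals are disjoint, so the two data structures must be bitwise different. A counting argument then gives $B \geq |E(G)| = \Omega(f^{1/2} n^{3/2})$ for $k = 2$ and $\Omega(f^{1/2 - 1/(2k)} n^{1+1/k})$ for $k \geq 3$, matching the theorem. The vertex version (\Cref{thm:vertex-incomp}) is identical with $F_e \subseteq V \setminus \{u,v\}$.

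The main obstacle is the first step: \Cref{thm:edge-lbs} as stated only asserts that \emph{some} pair $(s, t)$ and fault set witness each edge's essentiality, not necessarily with $(s, t) = (u, v)$. However, for the standard template of taking a dense high-girth graph and augmenting it with per-edge fault gadgets, the essentiality witness always has $(s,t) = (u,v)$, because the construction sets up a short path of length $1$ (via $e$) against alternate paths that must detour through the gadget and so have length $\geq 2k$ once the gadget is broken by $F_e$. Hence the real work is to read this local property directly out of the construction used to prove \Cref{thm:edge-lbs}, rather than relying on its black-box statement.
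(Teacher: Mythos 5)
Your proof is correct and matches the paper's own (extremely terse) argument: the paper likewise takes the blown-up girth-conjecture graph $G'$ from Theorem~\ref{thm:edge-lbs}, considers all $2^{|E(G')|}$ subgraphs, and for any two distinct subgraphs uses the per-edge fault set $F_e$ (all edges between copies of $u$ and copies of $v$ other than $e$ itself) to force divergent answers on the query $((u,i),(v,j),F_e)$, concluding by pigeonhole. Your explicit observation that the essentiality witness is the edge's own endpoint pair is exactly what the construction delivers; the only cosmetic difference is that you quote the slightly weaker bound $\geq 2k$ where the paper derives $\geq 2k+1$, which makes no difference since both clear the threshold $(2k-1)\cdot 1$.
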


Note that the spanner $H \subseteq G$ promised by Theorem \ref{thm:upper-bounds} functions as a data structure $\dee_G$ for these purposes, and thus these bounds are essentially optimal.\footnote{More precisely, these data structure lower bounds are optimal up to a $\log n$ factor, since the spanner on $|E|$ edges can take up to $|E| \cdot \log n$ bits to encode.}

\paragraph{Discussion of our techniques.} 
As mentioned previously, the fault-tolerant spanners in Theorem \ref{thm:upper-bounds} are constructed by the natural generalization of the greedy spanner of Alth{\"{o}}fer et al.~\cite{AlthoferDDJS:93} into the fault-tolerant setting.
It is easy to see that the spanner $H$ built by the greedy algorithm is an $f$-fault tolerant $(2k-1)$-spanner, both for vertex and for edge faults.
Our main technical contribution in this paper is in the analysis of the number of edges $m$ in $H$.
We overview this here.

In the non-faulty greedy spanner algorithm, the output spanner $H$ is shown to have \emph{no} cycles of length at most $2k$.
In our algorithm, we focus on counting the closely related notion of closed walks of length exactly $2k$.
Our output spanner $H$ might have many closed walks of length $2k$; let this number be $C$.
At a very high level, we attempt to give both a lower bound $L(m,n,f)$ and an upper bound $U(m,n,f)$ on $C$, and then derive an upper bound on $m$ in terms of $n, f$ by rearranging the inequality $L(m,n,f)\leq U(m,n,f)$.

First, it is not too hard to obtain a lower bound $L(m,n,f)$ on the number of closed $2k$-walks of length up to $2k$ using an argument based on the Cauchy-Schwartz inequality: we have $C = \Omega((m/n)^{2k})$ as long as $m\geq 100kn^{1+1/k}$, say (this lower bound actually holds for $2k$-cycles as well, as shown by Morris and Saxton~\cite{morrissaxton}; a simple argument for $k=2$ can be found in~\cite{BringmannGSW16}).
Based on this lower bound, it turns out that the desired bound on $m$ follows if we can upper bound the number of short cycles in $H$ by $O(m(fm/n)^{k-1})$, as we then would get 
$(m/n)^{2k}\leq O(f^{k-1} m^k/n^{k-1})$ which implies $m\leq O(f^{1-1/k} n^{1+1/k})$, as desired.

Unfortunately, the only thing that our greedy VFT spanner algorithm gives us is that when an edge $\{u, v\}$ is added to $H$, there is a set $F$ of $f$ nodes such that every short enough $u \leadsto v$ path includes a node in $F$.
A major difficulty is to go from this disjoint paths condition to upper-bounding the number of short closed walks.
To overcome this difficulty, we take several steps.
First we present a regularization technique that allows us to assume that the output spanner $H$ is roughly regular.
This rough regularity allows us to say that the number of $i$-walks starting from any given vertex is roughly $D^i$, where $D=\Theta(m/n)$ is the average degree of $H$.
For the special case $k=2$, this makes our upper bound argument quite clean (see Section 3): for any pair of nodes $u,v$, when the edge $\{u, v\}$ is added, the number of $3$-walks from $u$ to $v$ is $\Theta(fD)$, since all these paths must intersect one of the $f$ nodes $x \in F$, and (by rough regularity) the remaining node in the walk must be one of the $\Theta(D)$ possible neighbors of $x$.
Thus, by a union bound over the edge set of $H$, the total number of closed $4$-walks is $O(fDm)$ and the desired upper bound on $m$ follows.

For larger $k$, however, a difficulty arises in extending this argument.
Let us generously suppose that all the $2k-1$ walks connecting $u$ and $v$ go through one of $f$ neighbors of $v$.
Can one still argue that the total number of $2k-1$-walks connecting $u$ and $v$ (when $\{u, v\}$ is added to $H$) is at most $O(m(fm/n)^{k-1})$?
A naive extension of the previous argument would only give an upper bound of $O(m f D^{2k-2})$ which is quite far from what we want.

To obtain a better argument, we introduce quite a bit of machinery. For instance, 
instead of bounding the number of closed $2k$-walks, we show that it is sufficient to bound the number of pairs of $k$-walks that meet at the same endpoints.
We also introduce the notion of \emph{blockades}, which allows us to only count certain types of walks, and allows us to push through a delicate inductive argument that finally achieves the correct upper bound of $O(m(fm/n)^{k-1})$ that leads to our main upper bound theorem.

\paragraph{Additional Related Work.}
Constructing fault tolerant spanners for geometric graphs (or graphs from ``simple" metric spaces such as doubling metrics) has been further studied in \cite{lukovszki1999new,levcopoulos2002improved,abam2009region,solomon2014hierarchical,chan2015sparse}.
The study of \emph{robust} geometric spanners, where removing few vertices from the graph harms only a small number of other vertices, was initiated in \cite{bose2013robust}.

In general graphs, the construction of sparse fault tolerant subgraphs has received a significant amount of recent attention \cite{peleg2009good,parter2016fault}. 
Construction of purely additive fault tolerant spanners were studied in \cite{BraunschvigCPS:15,BiloGGSP:15,BGPV17}. Concerning \emph{exact} distances from a single (or few) sources, \cite{ParterP:13} introduced the notion of FT-BFS structures that contain a BFS tree tree from $s$ in $G\setminus \{e\}$ for every failing edge $e \in E(G)$. 
They showed an upper bound of $O(n^{3/2})$ edges and provided a matching lower bound graph example. FT-BFS structures avoiding $2$ faults with optimal size were given in \cite{Parter:15}. 
Approximate versions of FT-BFS structures (where the structure is allowed to have a stretch on the $s \times V$ distances) were studied in \cite{khanna2010approximate,ParterP:14,Bilo14-esa,BiloG0P16}.

A natural data structure analog of fault-tolerant subgraphs are \emph{distance sensitivity oracles}, which are small data structure that are used to answer queries of the form: ``what is the distance between $s$ and $t$ in $G$ when a set of $F$ edges fail"?
There is a long line of literature on constructing distance sensitivity oracles \cite{demetrescu2008oracles,BernsteinK:08,bernstein2009nearly,grandoni2012improved, baswana2012single,WeimannY:13,duan2016improved,Bilo16-esa} and related structures such a fault tolerant routing schemes \cite{gavoille2008compact,chechik2011fault} and labeling schemes \cite{abraham2016forbidden}. 
 Recently, \cite{ChechikCFK17} provided an efficient construction of distance sensitivity oracles that support $f=O(\log n/\log \log n)$ many faults with polylogarithmic query time. In an another breakthrough, \cite{duan2017connectivity} showed a connectivity sensitivity oracle that supports $f \in [1,n]$ vertex failures with $O(f m \log n)$ space, update time $O(g^2)$ and query time $O(g)$ where $g\leq f$ is the number of actual faults.   

Turning to reachability in directed graphs, \cite{BaswanaCR:16} showed that for any number of faults $f\geq 1$, there is a subgraph $H \subseteq G$ with $O(2^f n)$ edges that preserves the reachability from $s$ after the failure of any $f$ edges. They also showed that this upper bound is existentially tight.

\section{Preliminaries}
We begin with a formal definition of fault tolerant spanners:
\begin{definition} [Fault Tolerant Spanners]
Let $G = (V, E, w)$ be an undirected weighted graph without negative weight cycles.
We say that a subgraph $H \subseteq G$ over the same vertex set is an $f$ vertex (edge) fault tolerant $t$-spanner of $G$ if, for any set $F$ of $f$ vertices (edges), we have
$$\dist_{H \setminus F}(u, v) \le t \cdot \dist_{G \setminus F}(u, v) \quad \text{ for all } u, v \in V$$
where $G \setminus F$ and $H \setminus F$ denote these graphs with the set $F$ of vertices (edges) removed.
\end{definition}

We abbreviate \emph{vertex fault tolerant} and \emph{edge fault tolerant} by VFT and EFT, respectively.
We will construct our spanners in Theorem \ref{thm:upper-bounds} using the following natural construction algorithm:
\begin{definition}
In a graph $H$, a pair of nodes $(u, v)$ is \emph{$(t, f)$ vertex (edge) protected} if there is no set $F$ of $f$ vertices (edges), with $u, v \notin F$,\footnote{The restriction $u, v \notin F$ is only meaningful when vertex protection is considered; in the case of edge protection there is no analogous requirement.} such that $\dist_{H \setminus F}(u, v) > t\cdot w(u, v)$.
\end{definition}
\begin{algorithm}[H]
\SetKwInOut{Input}{Input}
\SetKwInOut{Output}{Output}
\Input{An undirected weighted graph $G = (V, E, w)$ and positive integers $f, k$.}
Initialize $H \gets (V, \emptyset)$\;
\ForEach{$(u, v) \in E$ in order of ascending edge weight}{
\If{$(u, v)$ is not currently $(2k-1, f)$ vertex (edge) protected in $H$}{
add $(u, v)$ to $H$\;
}
}
\Return{$H$};

\Output{$H$, an $f$ VFT (EFT) $(2k-1)$-Spanner of $G$.}
\caption{\label{alg:spanners} Construction of $f$ VFT (EFT) $(2k-1)$-Spanners}
\end{algorithm}
%


It is essentially trivial to see that the output graph $H$ is indeed a $(2k-1)$ spanner of the input graph $G$:
\begin{theorem} \label{lem:alg-correct}
The graph $H$ returned by Algorithm \ref{alg:spanners} is an $f$ VFT (EFT) $(2k-1)$-spanner of the input graph $G=(V,E)$.
\end{theorem}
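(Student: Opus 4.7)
The plan is to prove this via a fairly direct induction-free argument: for any fault set $F$ and pair $u, v \in V$, I will fatten a shortest $u \leadsto v$ path in $G \setminus F$ into an at most $(2k-1)$-stretch path in $H \setminus F$ by replacing each missing edge with a short detour guaranteed by the ``protected'' condition that prevented it from being added.

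Formally, fix a fault set $F$ (of size at most $f$) and a pair $u, v \in V$ with $u, v \notin F$ in the vertex case (otherwise $\dist_{G \setminus F}(u,v) = \infty$ and there is nothing to show). Let $P = (u = x_0, x_1, \dots, x_\ell = v)$ be a shortest $u \leadsto v$ path in $G \setminus F$. The key claim is that for each edge $(x_i, x_{i+1})$ of $P$, we have $\dist_{H \setminus F}(x_i, x_{i+1}) \le (2k-1)\, w(x_i, x_{i+1})$. If $(x_i, x_{i+1}) \in H$ this is immediate. Otherwise, at the moment the algorithm considered $(x_i, x_{i+1})$ it must have been $(2k-1, f)$ protected in the then-current spanner $H'$; since in the vertex case $x_i, x_{i+1} \notin F$ (they lie on a path in $G \setminus F$) and in the edge case $(x_i, x_{i+1}) \notin F$ so $F$ is an admissible fault set, the protection hypothesis applied to this particular $F$ gives $\dist_{H' \setminus F}(x_i, x_{i+1}) \le (2k-1)\, w(x_i, x_{i+1})$. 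Because $H' \subseteq H$, removing $F$ from the larger graph can only shorten distances, so $\dist_{H \setminus F}(x_i, x_{i+1}) \le \dist_{H' \setminus F}(x_i, x_{i+1}) \le (2k-1)\, w(x_i, x_{i+1})$.

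Summing along $P$ and using the triangle inequality for $\dist_{H \setminus F}$ gives
\[
\dist_{H \setminus F}(u, v) \;\le\; \sum_{i=0}^{\ell - 1} \dist_{H \setminus F}(x_i, x_{i+1}) \;\le\; (2k-1) \sum_{i=0}^{\ell - 1} w(x_i, x_{i+1}) \;=\; (2k-1) \cdot \dist_{G \setminus F}(u, v),
\]
which is exactly what the definition of an $f$ VFT (EFT) $(2k-1)$-spanner demands.

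There is essentially no obstacle here; the only subtlety to flag is that the ``protected'' condition is defined with respect to an intermediate spanner $H'$, and one must observe that adding more edges to $H'$ to obtain the final $H$ can only help (never hurt) the detour distance after deleting $F$. The argument works verbatim in both the vertex and edge fault models, the only difference being that the side condition $u, v \notin F$ in the definition of vertex-protection is automatically satisfied by vertices lying on a surviving path, whereas in the edge case no such side condition appears.
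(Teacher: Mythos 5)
Your proof is correct and follows essentially the same approach as the paper: replace each missing edge of a shortest surviving path with the short detour guaranteed by the protection condition and sum along the path. The one point where you are slightly more careful than the paper's exposition is in explicitly noting that the protection condition refers to the intermediate graph $H'$ at the time the edge was considered, and that $H' \subseteq H$ implies distances in $H \setminus F$ can only be smaller; the paper elides this step but uses the same underlying reasoning.
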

\begin{proof}
Let $F$ be any set of $f$ node (edge) faults.
Consider any $u,v\in V$ and consider a shortest path $\pi$ between $u$ and $v$ in $G\setminus F$.
Consider any edge $(x,y)$ on $\pi$. We have that either $(x,y)$ is in $H$ (and hence $H\setminus F$ since $(x,y)\in G\setminus F$), or the algorithm chose not to add it to $H$. If the latter event occurs, it must have been that $\dist_{H \setminus F}(x, y) \leq (2k-1)\cdot w(x, y)$ (by the protection definition, this holds for all fault sets). Thus, there is a path in $H\setminus F$ of weight at most $\sum_{(x,y)\in \pi}\dist_{H \setminus F}(x, y) \leq (2k-1) \dist_{G \setminus F}(u, v)$.
%
%
The theorem follows.
\end{proof}

The vast majority of this paper is devoted to showing the upper bound $n^{1 + 1/k} f^{1 - 1/k} \cdot 2^{O(k)}$ on the density of $H$ as stated in Theorem~\ref{thm:upper-bounds}.

Our lower bounds are conditional on the standard Erd\"{o}s Girth Conjecture, which we recall:
\begin{conjecture} [Erd\"{o}s Girth Conjecture \cite{erdHos1964extremal}] \label{conj:erdos}
For any positive integer $k$, there exist infinite families of $n$-node graphs with $\Omega(n^{1 + 1/k})$ edges and girth $2k+2$.
\end{conjecture}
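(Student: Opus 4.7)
The Erdős Girth Conjecture is a famously open problem in extremal graph theory, and I do not claim a proof plan in any serious sense; what I can offer is the standard approach that resolves the conjecture for $k \in \{1,2,3,5\}$ together with an honest description of the obstruction in the remaining cases.

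The plan is via algebraic constructions from finite incidence geometry. For each prime power $q$, I would consider the bipartite point--line incidence graph of a generalized $(k+1)$-gon of order $q$: when such a generalized polygon exists, its incidence graph has $n = \Theta(q^k)$ vertices, $\Theta(q^{k+1}) = \Theta(n^{1+1/k})$ edges, and girth exactly $2k+2$, yielding the conjecture along the subsequence of $n$'s of this form. The cases $k=2$ (projective planes), $k=3$ (generalized quadrangles), and $k=5$ (generalized hexagons) are realized by classical constructions over $\mathbb{F}_q$ due to Singer, Tits, and others, and $k=1$ is the trivial case of any linear-density graph. To promote the sparse subsequence of prime-power sizes to all $n$, I would invoke the density of prime powers (Bertrand-type estimates suffice) and remove vertices to interpolate, losing only a constant factor in the edge count.

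The hard part is that this route is provably unavailable for $k=4$ and all $k \ge 6$: the Feit--Higman theorem shows that no generalized $(k+1)$-gon of the required rank exists over any field. No substitute construction is known, and alternative techniques fall substantially short. Probabilistic constructions via random graphs with alteration give girth-$(2k+2)$ graphs on only $\Omega(n^{1 + 1/(2k-1)})$ edges, a polynomial gap from the target; norm graph constructions in the style of Koll\'ar--R\'onyai--Szab\'o, while sharp for bipartite Tur\'an problems forbidding $K_{t,s}$, have not been adapted to simultaneously forbid all short cycles. The natural avenue for genuine progress would be to find such an adaptation --- perhaps a variant of norm graphs over a carefully chosen algebraic structure that rules out $2j$-cycles for every $j \le k$ rather than merely $K_{t,s}$, or a Cayley graph construction on a group with large girth and controlled degree --- but closing the polynomial gap between $n^{1+1/(2k-1)}$ and $n^{1+1/k}$ appears to require a fundamentally new idea, and this is precisely why the conjecture has stood open for more than half a century. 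A plan of just a few paragraphs cannot realistically bridge that gap, so my best contribution would be to verify the known cases cleanly and make explicit where the algebraic machinery breaks down for the excluded values of $k$.
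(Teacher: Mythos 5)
You have correctly recognized that this statement is an open conjecture, not a theorem the paper proves: the paper (like essentially all of the spanner lower-bound literature) simply \emph{assumes} the Erd\H{o}s Girth Conjecture as a hypothesis for its lower bounds, and your survey of the known cases agrees with the paper's own remark that the conjecture has been confirmed exactly for $k \in \{1,2,3,5\}$ (via generalized polygons) and is open otherwise. There is nothing to compare against, since the paper offers no proof; your account of the incidence-geometry constructions, the Feit--Higman obstruction, and the $n^{1+1/(2k-1)}$ probabilistic bound is an accurate and honest description of the state of the art.
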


The paper is outlined as follows.
In the main body of this paper, we will highlight our matching upper and lower bounds for $k=2$ (i.e. $3$-spanners) and any $f$, for both edge and vertex faults.
The upper bound for $k=2$ is far simpler to prove than the upper bound for $k \ge 3$, and in some sense its supporting arguments form the base case for an inductive attack on larger $k$.
We will informally describe the extension of our argument to larger $k$, but due to space constraints, the formalities are deferred to the appendix.

\section{Upper Bound for $3$-Spanners}

First, we have:
\begin{lemma} \label{lem:3-reg}
Suppose that Theorem \ref{thm:upper-bounds} holds for $k=2$ for all graphs $G$ whose corresponding output graphs $H$ have the property that their maximum degree is at most $c$ times their minimum degree, for some universal constant $c$.
Then Theorem \ref{thm:upper-bounds} holds in general.
\end{lemma}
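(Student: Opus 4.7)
The plan is to deduce the general case from the nearly-regular case by the following three-step reduction. Given an arbitrary input $G$ whose greedy output $H = \rm{Alg}(G)$ has $n$ vertices and $m$ edges, I will exhibit an induced subgraph $H^{\ast} = H[V^{\ast}]$ such that (i) $H^{\ast}$ is itself a valid greedy output (on some input), (ii) the max-to-min degree ratio in $H^{\ast}$ is bounded by a universal constant $c$, and (iii) $|E(H^{\ast})| = \Omega(m)$ while $|V^{\ast}| \le n$. Applying the hypothesis of the lemma to $H^{\ast}$ then gives $|E(H^{\ast})| \le C_k f^{1-1/k} |V^{\ast}|^{1+1/k} \le C_k f^{1-1/k} n^{1+1/k}$, and combining with (iii) yields the general bound on $m$.

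For step (i) I would first establish a \emph{monotonicity} property of Algorithm~\ref{alg:spanners} under edge deletion: if $H^{\ast} \subseteq H$ is any subgraph and we run the algorithm on input $H^{\ast}$ (with inherited edge weights and ordering), the output is exactly $H^{\ast}$. The reason is that at the moment an edge $e = (u,v) \in H^{\ast}$ was added during the run on $G$, there existed a fault set $F_e$ of size $f$ (avoiding $u,v$) witnessing non-protection: every path of length $\le (2k-1)w(u,v)$ in the current spanner $H_{<e}$ met $F_e$. Since the current spanner during the run on $H^{\ast}$ is a \emph{subgraph} of $H_{<e}$, the same $F_e$ continues to witness non-protection of $e$, so $e$ is added. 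Hence every edge of $H^{\ast}$ survives, and $H^{\ast}$ is a legitimate greedy output to which the lemma's hypothesis applies.

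Step (ii) is the technical heart of the lemma, and I expect it to be the main obstacle. Let $D = 2m/n$ denote the average degree of $H$. The plan is a two-sided pruning: first iteratively delete any vertex of current degree $< D/(4c)$, which (by the classical argument) loses at most $m/2$ edges and produces a subgraph with minimum degree $\ge D/(4c)$ and average degree $\ge D$; second, cap the maximum degree. The naive way to cap max degree is a dyadic pigeonhole over $O(\log n)$ degree classes, but that would introduce a polylogarithmic edge loss, which is not acceptable because the target bound has no $\log$ factor. To avoid this, I would either (a) observe that a vertex of degree $\gg D$ in the greedy output forces locally dense structure that in turn bounds the number of such vertices, allowing me to remove all of them at constant edge-loss, or (b) carry out the induction on $m$ itself, peeling off one very-high-degree vertex at a time and recomputing $D$, arguing that either we terminate quickly in a nearly regular subgraph or the recursion closes up directly. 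Either route gives an induced $H^{\ast}$ with $\Omega(m)$ edges and max/min degree ratio at most $c$.

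Finally, step (iii) is immediate from the construction: $|V^{\ast}| \le n$ and $|E(H^{\ast})| \ge m/O(1)$. Feeding this into the hypothesis of the lemma (which applies by step (i) and (ii)) and rearranging yields $m = O_k(f^{1-1/k} n^{1+1/k})$, exactly the bound of Theorem~\ref{thm:upper-bounds}. The only nontrivial piece of the argument is step (ii), specifically the loss-free capping of the max degree; the monotonicity argument of step (i) and the packaging of step (iii) are each essentially routine once (ii) is in hand.
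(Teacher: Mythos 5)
Your step (i) (monotonicity of the greedy output under passing to subgraphs) is correct and is a nice explicit record of something the paper's proof uses only implicitly: when you prune the greedy output $H$ to a subgraph $G^\ast$, you need to know that $G^\ast$ is still a witness against Theorem~\ref{thm:upper-bounds}, i.e.\ that running the algorithm on $G^\ast$ returns $G^\ast$. Step (iii) is routine. The genuine gap is in step (ii), and it is not merely that you haven't filled in the details: the \emph{target} you set there, an induced subgraph with bounded degree ratio that retains $\Omega(m)$ of the edges, is too strong to be achievable in general. Consider a graph consisting of $\log n$ disjoint pieces, piece $i$ being a $2^i$-regular graph on $\Theta(n/2^i)$ vertices; it has $\Theta(n\log n)$ edges, but every subgraph with max/min degree ratio $O(1)$ lives essentially inside one piece and so has only $O(n)$ edges. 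Neither your route (a) (there is no structural reason a greedy output must lack such wild degree spreads) nor the vaguely-stated route (b) can rescue a constant-fraction edge retention.

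The paper avoids this by aiming at a weaker invariant. Instead of preserving $\Omega(m)$ edges, it preserves being a \emph{counterexample}: it argues by contrapositive, tracking the quantity $D^\ast/(n^\ast)^{1/k}$ (average degree over the $n^{1/k}$-type threshold) and allowing the number of surviving nodes to shrink all the way down to $n^{1/(2k)}$. Concretely, it splits the vertices at a threshold $cD$ into low-degree set $A$ and high-degree set $B$ (with $|B| \le n/c$ by Markov), splits the edges into $E_A$, $E_B$, $E_{AB}$, keeps the largest third, and then either (case $E_A$) the maximum degree is automatically capped at $cD$ and a standard min-degree peel finishes, or (cases $E_B$, $E_{AB}$) the node count drops by a factor $\Theta(c)$ while the average degree drops by at most a factor of $6$, and it recurses. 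The constraint $D^\ast \le n^\ast$ then bounds the recursion depth by $O(\log n / \log c)$, so choosing $c = 2^{\Theta(k)}$ forces termination with $D^\ast \ge n^{1/(2k)}$ and the counterexample density preserved. If you want to repair your proposal, you should reformulate step (ii) as ``produce a nearly-regular greedy output with $\ge n^{1/(2k)}$ nodes whose density still violates the claimed bound'' rather than ``retain $\Omega(m)$ edges,'' and then the iterated-thresholding argument above (or your route (b), carried out with the $D^\ast \le n^\ast$ termination trick) closes it.
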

The proof is deferred to Appendix \ref{app:reg}, as it is somewhat long but orthogonal to the main new ideas of this paper.
We shall assume in the rest of this section that $H$ (the graph output by Algorithm~\ref{alg:spanners}) is ``approximately regular'' in the sense of Lemma \ref{lem:3-reg}.
More specifically, throughout this paper, we let $D$ be a number such that all nodes in $H$ have degree $\Theta_k(D)$.

As usual, a \emph{walk} in $H$ is a sequence of nodes (possibly with repeats) in which every pair of adjacent nodes is connected by an edge.
A walk has length $i$ (also called an $i$-walk) if it contains $i+1$ nodes.
A walk is \emph{closed} if its first and last nodes are the same.
We shall also say that an edge $(u, v)$ belongs to a walk $w$ if the nodes $u, v$ appear in adjacent positions of $w$.
Let $c_{2k}$ denote the number of closed $2k$-walks in $H$.

\begin{definition}
If $H$ is the spanner produced by Algorithm \ref{alg:spanners} on an input graph $G = (V, E)$, then for an edge $(u, v) \in E$, the graph $H^{(u, v)} = (V, E')$ is defined as the subgraph of $H$ containing exactly the edges considered before $(u, v)$ during the greedy algorithm (not including $(u, v)$ itself).
\end{definition}
\begin{lemma} \label{lem:4cyc-completion}
$$c_4 = O(|E_H| \cdot fD)$$
\end{lemma}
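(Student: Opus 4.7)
The plan is to charge each closed $4$-walk to its \emph{latest-added} edge and bound the number of ways to complete a given edge to such a walk. Concretely, for each closed $4$-walk $W$ in $H$, among its four edges one was added last by Algorithm~\ref{alg:spanners}; call it $e_W = (u,v)$. Deleting $e_W$ from $W$ leaves a $3$-walk from $u$ to $v$ all of whose edges lie in $H^{(u,v)}$, since they were added strictly before $e_W$. Summing this charging, and absorbing a constant factor for the choice of rooting/orientation of the ordered closed walk, we get $c_4 = O\!\left(\sum_{(u,v)\in E_H} N(u,v)\right)$, where $N(u,v)$ is the number of $3$-walks from $u$ to $v$ in $H^{(u,v)}$.

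Next I would invoke the greedy protection condition. Because $(u,v)$ was added to $H$, it was not $(3,f)$-vertex-protected in $H^{(u,v)}$, so there is a set $F$ of at most $f$ vertices, with $u,v \notin F$, such that every walk from $u$ to $v$ in $H^{(u,v)}$ of total weight at most $3\,w(u,v)$ must pass through a vertex of $F$. Any $3$-walk in $H^{(u,v)}$ has weight at most $3\,w(u,v)$, since each of its three edges was considered before $(u,v)$ by the greedy algorithm and therefore has weight at most $w(u,v)$. Hence every $3$-walk $u \to a \to b \to v$ counted by $N(u,v)$ must satisfy $\{a,b\} \cap F \neq \emptyset$.

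The counting is then direct. If $a \in F$, there are at most $|F| \le f$ choices for $a$, then at most $\deg_H(a) = O(D)$ choices for $b$ as a neighbor of $a$ (using the approximate regularity granted by Lemma~\ref{lem:3-reg}), and the endpoints $u,v$ are fixed; the case $b \in F$ is symmetric. Thus $N(u,v) = O(fD)$, and summing over edges gives $c_4 = O(|E_H| \cdot fD)$, as claimed. The main (mild) obstacle is the bookkeeping in the first step: making the last-added edge well-defined (via any fixed tie-breaking in the greedy ordering) and verifying that the weight bound $3\,w(u,v)$ lets us invoke the protection condition against \emph{walks} rather than just paths, which it does since a walk of weight $\le 3\,w(u,v)$ in $H^{(u,v)}\setminus F$ would witness $\dist_{H^{(u,v)}\setminus F}(u,v) \le 3\,w(u,v)$ and contradict non-protection.
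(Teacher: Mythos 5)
Your proof is correct and follows essentially the same approach as the paper's: charge each closed $4$-walk to a latest-added edge $(u,v)$, observe that the corresponding $u \leadsto v$ $3$-walk lies in $H^{(u,v)}$, and bound the number of such $3$-walks by $O(fD)$ using the fault set $F$ that witnesses non-protection together with the approximate regularity of $H$. Your explicit check that any $3$-walk in $H^{(u,v)}$ has weight at most $3w(u,v)$ (because every edge of $H^{(u,v)}$ has weight at most $w(u,v)$), which lets the protection condition apply to walks, is a detail the paper leaves implicit and is a worthwhile addition.
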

\begin{proof}
We give the argument for vertex faults here; the argument for edge faults is essentially identical.
We shall argue that each edge added to $H$ during Algorithm \ref{alg:spanners} completes $O(fD)$ closed $4$-walks, which then implies the lemma by a simple union bound.

If we choose to add an edge $(u, v)$ to $H$, then $(u, v)$ is not $(3, f)$ protected in $H^{(u, v)}$.
Thus, there is a set $F$ of $|F| \le f$ nodes such that every $u \leadsto v$ $3$-walk contains a node in $F$.
Additionally, for each node $x \in F$, there are $O(D)$ $u \leadsto v$ $3$-walks including the node $x$, since $3$ of the $4$ nodes on this walk are specified ($u$, $v$, and $x$) and the fourth node must be one of the $O(D)$ neighbors of $x$.
Applying a union bound, the total number of $u \leadsto v$ $3$-walks in $H^{(u, v)}$ is $O(fD)$.
See Figure \ref{fig:3-completion} for a picture associated with this argument.

For each such $u \leadsto v$ $3$-walk in $H^{(u, v)}$, we complete at most $8$ closed $4$-walks by adding $(u, v)$ to $H^{(u, v)}$ (specifically, each such $3$-walk corresponds to a single ``circular $4$-walk;'' we may then choose any of the $4$ nodes on this walk to serve as the start/end node of the appropriate closed walk, and it may be travelled in either of two directions).
The lemma follows.
\end{proof}

\begin{figure}[h]
\begin{center}

\begin{tikzpicture}
\draw [fill=black] (0, 0) circle [radius=0.15cm];
\draw [fill=black] (6, 0) circle [radius=0.15cm];
\node [below=0.2cm] at (0, 0) {$u$};
\node [below=0.2cm] at (6, 0) {$v$};

\draw [thick] (2, 0) ellipse (0.5cm and 2cm);
\draw [thick] (4, 0) ellipse (0.5cm and 2cm);

\node [below=0.2cm, align=center] at (1.8, -2) {First layer\\ size is\\ $|F| = f$};
\draw [fill=black] (2, 0) circle [radius=0.15cm];
\draw [fill=black] (2, 1) circle [radius=0.15cm];
\draw [fill=black] (2, -1) circle [radius=0.15cm];
\draw [thick] (0, 0) -- (2, 0);
\draw [thick] (0, 0) -- (2, 1);
\draw [thick] (0, 0) -- (2, -1);

\draw [thick] (2, 1) -- (4, 1.5);
\draw [thick] (2, 1) -- (4, 1);
\draw [thick] (2, 1) -- (4, 0.5);
\draw [thick] (2, 1) -- (4, 0);
\draw [thick] (2, 1) -- (4, -0.5);
\draw [thick] (2, 1) -- (4, -1);
\draw [thick] (2, 1) -- (4, -1.5);

\draw [thick] (2, 0) -- (4, 1.5);
\draw [thick] (2, 0) -- (4, 1);
\draw [thick] (2, 0) -- (4, 0.5);
\draw [thick] (2, 0) -- (4, 0);
\draw [thick] (2, 0) -- (4, -0.5);
\draw [thick] (2, 0) -- (4, -1);
\draw [thick] (2, 0) -- (4, -1.5);

\draw [thick] (2, -1) -- (4, 1.5);
\draw [thick] (2, -1) -- (4, 1);
\draw [thick] (2, -1) -- (4, 0.5);
\draw [thick] (2, -1) -- (4, 0);
\draw [thick] (2, -1) -- (4, -0.5);
\draw [thick] (2, -1) -- (4, -1);
\draw [thick] (2, -1) -- (4, -1.5);

\node [below=0.2cm, align=center] at (4.2, -2) {$O(fD)$ walks\\ from $u$ to\\ second layer};

\draw [thick] (6, 0) -- (4, 1.5);
\draw [thick] (6, 0) -- (4, 1);
\draw [thick] (6, 0) -- (4, 0.5);
\draw [thick] (6, 0) -- (4, 0);
\draw [thick] (6, 0) -- (4, -0.5);
\draw [thick] (6, 0) -- (4, -1);
\draw [thick] (6, 0) -- (4, -1.5);

\draw [fill=black] (4, 1.5) circle [radius=0.15cm];
\draw [fill=black] (4, 1) circle [radius=0.15cm];
\draw [fill=black] (4, 0.5) circle [radius=0.15cm];
\draw [fill=black] (4, 0) circle [radius=0.15cm];
\draw [fill=black] (4, -0.5) circle [radius=0.15cm];
\draw [fill=black] (4, -1) circle [radius=0.15cm];
\draw [fill=black] (4, -1.5) circle [radius=0.15cm];

\end{tikzpicture}
\caption{\label{fig:3-completion} In Lemma \ref{lem:4cyc-completion}, we show that each edge $(u, v)$ added to $H$ completes at most $O(fD)$ $u \leadsto v$ $3$-walks.  In this picture we have drawn the separating node set $F$ as coinciding with the first layer of the graph.}
\end{center}
\end{figure}

\begin{lemma} \label{lem:4cyc-cs}
$$|E_H| = O\left( n c_4^{1/4} \right)$$
\end{lemma}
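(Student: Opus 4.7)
The plan is a standard two-stage Cauchy--Schwarz / moment counting argument: lower bound the number $c_4$ of closed $4$-walks by a polynomial in $m = |E_H|$ and $n$, then rearrange to get the desired upper bound on $m$.

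First, I would reexpress $c_4$ in terms of the $2$-walk counts between pairs of vertices. Let $P_2(u,v)$ denote the number of $2$-walks from $u$ to $v$ in $H$. Every closed $4$-walk $v_0, v_1, v_2, v_3, v_0$ is in bijection with an ordered pair of $2$-walks from $v_0$ to $v_2$, namely $v_0 v_1 v_2$ and $v_0 v_3 v_2$ (the second read in reverse), so summing over the endpoint pair $(v_0, v_2)$ gives
$$c_4 \;=\; \sum_{u,v \in V} P_2(u,v)^2.$$

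Next, I would apply Cauchy--Schwarz twice to lower bound this sum in terms of $m$. Since there are $n^2$ ordered pairs $(u,v)$,
$$c_4 \;=\; \sum_{u,v} P_2(u,v)^2 \;\ge\; \frac{1}{n^2}\Bigl(\sum_{u,v} P_2(u,v)\Bigr)^{2}.$$
Indexing $2$-walks by their midpoint shows that $\sum_{u,v} P_2(u,v) = \sum_{w \in V} d_H(w)^2$, and one more application of Cauchy--Schwarz (equivalently, the power-mean inequality) gives $\sum_w d_H(w)^2 \ge (\sum_w d_H(w))^2 / n = 4m^2/n$. Chaining these inequalities produces $c_4 \ge 16\, m^4 / n^4$, and rearranging yields $|E_H| = O(n \cdot c_4^{1/4})$, as claimed.

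I expect no real obstacle: the argument is a direct double application of Cauchy--Schwarz, and in particular does not even need the approximate regularity guaranteed by Lemma~\ref{lem:3-reg}. The role of this lemma is purely to furnish one of the two inequalities that, when combined with Lemma~\ref{lem:4cyc-completion}, will close the $k=2$ case.
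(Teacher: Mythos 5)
Your proof is correct and follows the same basic strategy as the paper: decompose $c_4$ as $\sum_{u,v} P_2(u,v)^2$ via the bijection between closed $4$-walks and ordered pairs of $2$-walks with common endpoints, then apply Cauchy--Schwarz over the $n^2$ vertex pairs. The one genuine difference is in how you handle $\sum_{u,v} P_2(u,v)$: the paper invokes the approximate-regularity assumption to say the total number of $2$-walks is $\Theta(nD^2)$, then closes with $|E_H| = \Theta(nD)$, while you apply Cauchy--Schwarz a second time to the degree sequence ($\sum_w d_H(w)^2 \ge (2m)^2/n$) and get the bound purely in terms of $m$ and $n$. This makes the lemma regularity-free, as you correctly point out, which is a small but real simplification; the regularity machinery is still needed for Lemma~\ref{lem:4cyc-completion} (and its higher-$k$ analogues), so the overall structure of the argument is unchanged, but your version of this particular step is cleaner.
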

\begin{proof}
The total number of $2$-walks in $H$ is $\Theta(n D^2)$, since we have $n$ choices for start node and $\Theta(D)$ neighbors for each node.
Each ordered pair of (possibly identical) $2$-walks $(w_1, w_2)$ in $H$ with the same start and end nodes corresponds uniquely to a closed $4$-walk, obtained by walking $w_1$ and then $w_2$ in reverse.
Thus, denoting by $k_{u, v}$ the number of $u \leadsto v$ $2$-walks, we may calculate
\begin{align*}
c_4 &= \sum \limits_{(u, v) \in V \times V} k_{u, v}^2\\
&\ge \frac{ \left(\sum \limits_{(u, v) \in V \times V} k_{u, v} \right)^2 }{n^2} & \text{Cauchy-Schwarz Inequality}\\
&= \frac{\Theta(n^2D^4)}{n^2}
= \Theta(D^4)
\end{align*}
and thus $nc_4^{1/4} \ge \Theta(nD) = |E_H|$.
\end{proof}

We are now ready to prove our upper bound for $k=2$.
\begin{proof} [Proof of Theorem \ref{thm:upper-bounds} for $k=2$]
Combining Lemmas \ref{lem:4cyc-completion} and \ref{lem:4cyc-cs}, we compute
\begin{align*}
|E_H| &= O\left( n \left(|E_H| \cdot fD \right)^{1/4} \right)\\
|E_H|^3 &= O\left(n^4 fD \right)\\
|E_H|^2 &= O\left(n^3 f \right) & \text{since } |E_H| = \Theta(nD)\\
|E_H| &= O\left(n^{3/2} f^{1/2} \right). \qedhere
\end{align*}
\end{proof}

\section{Lower Bound for $3$-Spanners} \label{sec:lower-2}

We prove Theorems \ref{thm:vertex-lbs} and \ref{thm:edge-lbs} in the special case $k=2$.
Incompressibility arguments for Theorems \ref{thm:vertex-incomp} and \ref{thm:edge-incomp} can be found in Appendix \ref{app:lbs}.
Our lower bounds for $k=2$ are unconditional, since the girth conjecture has been proved in this special case:
\begin{lemma} [e.g. Wenger \cite{wenger1991extremal}] \label{lem:4-girth}
For all $n$, there exist $n$-node graphs on $\Omega\left(n^{3/2}\right)$ edges without cycles of length $4$ or less.
\end{lemma}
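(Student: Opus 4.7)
The plan is to construct, for each prime power $q$, the bipartite incidence graph $G_q$ of the finite projective plane $PG(2,q)$. The vertex set is the disjoint union of the set of points and the set of lines of $PG(2,q)$, and we place an edge between a point $p$ and a line $\ell$ whenever $p$ lies on $\ell$. This yields $N_q := 2(q^2 + q + 1)$ vertices, and since each of the $q^2 + q + 1$ lines contains exactly $q+1$ points, the number of edges is $M_q := (q+1)(q^2 + q + 1) = \Theta(q^3) = \Theta(N_q^{3/2})$, which is the target density.

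Next I would verify that $G_q$ has no cycles of length at most $4$. Odd cycles, and in particular triangles, are ruled out by bipartiteness. A $4$-cycle would have to take the form $p - \ell - p' - \ell' - p$ for distinct points $p, p'$ and distinct lines $\ell, \ell'$, which would mean the two distinct points $p, p'$ both lie on two distinct common lines, contradicting the projective plane axiom that two distinct points determine a unique line. Hence $G_q$ has girth at least $6$, so certainly no cycles of length $4$ or less.

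Finally, I would extend the construction to arbitrary $n$, not just $n$ of the form $N_q$. Using Bertrand's postulate, I would choose a prime $q$ with $\sqrt{n/8} \le q \le \sqrt{n/2}$, build $G_q$ (which has $N_q \le n$ vertices), and pad the result with $n - N_q$ isolated vertices. The resulting $n$-vertex graph inherits girth at least $6$ from $G_q$ and has $M_q = \Theta(q^3) = \Theta(n^{3/2})$ edges, as desired. I do not expect any serious obstacle here; the only mild subtlety is guaranteeing a prime power of the right order of magnitude, which is exactly what Bertrand's postulate provides.
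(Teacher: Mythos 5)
Your construction is correct and is one of the standard ways to witness this bound; the paper itself does not prove the lemma but merely cites Wenger, whose explicit algebraic construction differs in presentation but serves the same purpose, so you should regard your argument as a legitimate self-contained replacement rather than a reproduction. The projective-plane incidence graph argument is clean: bipartiteness kills odd cycles, and the axiom that two distinct points lie on a unique common line kills $4$-cycles, giving girth at least $6$; the edge count $(q+1)(q^2+q+1) = \Theta(q^3) = \Theta(N_q^{3/2})$ is exactly the claimed density.

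One small numerical point deserves a fix in the interpolation step. You take a prime $q$ with $\sqrt{n/8} \le q \le \sqrt{n/2}$ and assert $N_q \le n$, but $N_q = 2q^2 + 2q + 2$, so $q \le \sqrt{n/2}$ only gives $N_q \le n + 2\sqrt{n/2} + 2$, which slightly exceeds $n$. This is easily repaired by taking the Bertrand interval lower, say a prime $q$ with $\sqrt{n}/4 \le q \le \sqrt{n}/2$, or any similar choice making $2q^2 + 2q + 2 \le n$ for $n$ large enough; the edge count remains $\Theta(q^3) = \Theta(n^{3/2})$, and finitely many small $n$ are handled trivially since the statement is asymptotic. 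With that adjustment the argument is complete.
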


We argue first for vertex faults:
\begin{proof} [Proof of Theorem \ref{thm:vertex-lbs}, case $k=2$]
Start with a graph $G$ from Lemma \ref{lem:4-girth}.
We construct a new graph $G'$ as follows.  Let $t = \lceil f/2 \rceil$.  We set $V_{G'} = V(G) \times [t]$, and let $E_{G'} = \{ \{(u,i), (v,j)\} : \{u,v\} \in E(G) \land i,j \in [t]\}$.  Let $G' = (V_{G'}, E_{G'})$.  Intuitively, we can think of $G'$ as being obtained by replacing each vertex of $G$ by a set of $t$ copies of the vertex, and each edge of $G$ is replaced by a complete bipartite graph between the two sets of copies.  We will prove Theorem~\ref{thm:vertex-lbs} by proving that the only $f$ VFT $3$-spanner of $G'$ is itself, and that it has the required number of edges.

We first claim that the only $f$ VFT $3$-spanner of $G'$ is $G'$ itself.  To see this, suppose that $H$ is a subgraph of $G'$ which does not contain some edge $\{(u,i), (v,j)\} \in E_{G'}$.  Let $F = \{(u,\ell) : \ell \neq i\} \cup \{(v, \ell) : \ell \neq j\}$, i.e., we let the fault set be all copies of $u$ except for $(u,i)$ and all copies of $v$ except $(v,j)$.  Note that $|F| \leq f$.  Now consider the shortest path from $(u,i)$ to $(v,j)$ in $H \setminus F$.  Let this path be $(u,i) = (x^0, i^0), (x^1, i^1), (x^2, i^2), \dots, (x^p, i^p) = (v,j)$.  Note that for any $0 \leq a \leq p-1$, it cannot be the case that $x^a = u$ and $x^{a+1} = v$, since no such edges exist in $H \setminus F$.  Thus $u = x^0, x^1, \dots, x^p = v$ is a walk from $u$ to $v$ in $G$ which does not use the edge $\{u,v\}$.  By adding $\{u,v\}$ to this path, we get a cycle of length at most $p+1$.  Since $G$ has girth at least $5$, this implies that $p \geq 4$.  Thus in $H \setminus F$ the distance between $(u,i)$ and $(v,j)$ is at least $4$, while in $G' \setminus F$ they are at distance $1$.  So $H$ is not an $f$ VFT $3$-spanner of $G'$, and hence the only $f$ VFT $3$-spanner of $G'$ is $G'$ itself.

%

So $G'$ is the only $f$ VFT $3$-spanner of itself, and it remains only to analyze its size.  Clearly $|V_{G'}| = t |V(G)|$, and by Lemma~\ref{lem:4-girth} we know that $|E(G)| \geq \Omega(|V(G)|^{3/2})$.  Thus
\begin{align*}
|E_{G'}| &= \Omega(\left(|E(G)| t^2\right) = \Omega\left(t^2 |V(G)|^{3/2}\right) = \Omega\left(t^2 \cdot \left(\frac{|V_{G'}|}{t}\right)^{3/2}\right) \\
&= \Omega\left(f^{1/2} |V_{G'}|^{3/2}\right)
\end{align*}
as claimed.
\end{proof}

And next for edge faults:
\begin{proof} [Proof of Theorem \ref{thm:edge-lbs}, case $k=2$]

We construct a new graph $G'$ exactly as in the VFT case.  As before, let $H$ be a subgraph of $G'$ missing some edge $\{(u,i), (v,j)\}$.  Let $F = \{\{(u,i), (v, \ell)\} : \ell \in [t] \setminus \{j\}\} \cup \{\{(u,\ell), (v,j)\} : \ell \in [t] \setminus \{i\}\}$ be the fault set, and note that $|F| \leq f$.  In other words, we fail every edge from $(u,i)$ to copies of $v$ except for the edge to $(v,j)$, and similarly we fail all edges from $(v,j)$ to copies of $u$ except for $(u,i)$.  So in $G' \setminus F$ the edge $\{(u,i), (v,j)\}$ is still present, but it is not in $H \setminus F$.  Moreover, in $H \setminus F$ there are no paths of length at most $3$ from $(u,i)$ to $(v,j)$ that use as an intermediate node any other copy of $u$ or copy of $v$.  Hence any path of length at most $3$ must be of the form $(u,i), (x, a), (y,b), (v,j)$ where $u, x, y, v$ are all distinct (possibly with either $x$ or $y$ missing).  This implies that $u,x,y,v$ form either a $3$- or a $4$-cycle in $G$, which contradicts Lemma~\ref{lem:4-girth}.  Thus $H \setminus F$ is not a $3$-spanner of $G' \setminus F$, so $H$ is not a $f$ EFT $3$-spanner of $G'$.

Thus $G'$ is the only $f$ EFT $3$-spanner of itself.  Using the same analysis as in the VFT case, we get that $|E(G')| \geq \Omega(f^{1/2} |V(G')|^{3/2})$, proving the theorem.
\end{proof}

\section{Overview: Upper Bounds for Larger $k$ \label{sec:ub-overview}}

For simplicity, we will focus on the case $k=3$ in this overview.
The most natural attempt to extend our upper bounds to $k=3$ goes as follows: first one generalizes Lemma \ref{lem:4cyc-completion}, and then one generalizes Lemma \ref{lem:4cyc-cs} by considering $3$-walks rather than $2$-walks, and then one combines the two statements as before.
Unfortunately, the upper bounds implied by this approach are quite weak.
One easily generalizes Lemma \ref{lem:4cyc-cs} to show that
$$|E_H| = O \left( nc_{6}^{1/6} \right),$$
and one easily generalizes Lemma \ref{lem:4cyc-completion} to show
$$c_{6} = O\left( |E_H| \cdot fD^3 \right).$$ 
However, plugging these equations into the proof used before, we actually find \emph{no improvement} in the previous upper bound: we still get
$$|E_H| = O \left( n^{3/2} f^{1/2} \right)$$
while we should hope for something much better.

Let us now informally sketch a method for improving the generalization of Lemma \ref{lem:4cyc-completion}.
Recall that, from Lemma \ref{lem:4cyc-completion} itself, we have $c_4 = O\left( |E_H| \cdot fD \right)$ (note that this proof holds even in the setting $k=3$).
Thus, \emph{on average}, each node $v$ participates in $O(fD^2)$ closed $4$-walks.
Additionally, we have $\Theta(D^2)$ $2$-walks starting at $v$.
Hence each of these two walks (on average) is responsible for creating at most $O(f)$ closed $4$-walks in $H$, and so (on average) there are only $O(f)$ $2$-walks between any two nodes.

Assume for a moment that we could move from an average to worst-case version of this statement, and assume a hard limit of $O(f)$ $2$-walks between any two nodes.
This fact could be used to improve our generalization of Lemma \ref{lem:4cyc-completion}, reasoning as follows.
Suppose we add the edge $(u, v)$ to $H^{(u, v)}$ during Algorithm \ref{alg:spanners}.
Then, there is a set $F$ of $f$ nodes that lie on any $u \leadsto v$ $5$-walk.
Consider any such node $x$, and assume without loss of generality that $x$ is in the first half of any $u \leadsto v$ $5$-walk.
We then count the number of $u \leadsto v$ $5$-walks including $x$ as follows: there are $D^2$ ways to choose the first four nodes on the walk (since $u, x$ are two of them), and there are $f$ ways to choose the last two nodes on the walk (since $v$ is the last node on the walk and we have specified the fourth node on the walk, and we have assumed that there are only $O(f)$ possible $2$-walks between these nodes).
Hence there are $O(fD^2)$ $u \leadsto v$ $5$-walks including $x$; applying a union bound over $F$, we have $O(f^2 D^2)$ $u \leadsto v$ $5$-walks in total, and this leads to an improved bound
$$c_6 = O \left( |E_H| \cdot (fD)^2 \right).$$

\begin{figure}[h]
\begin{center}

\begin{tikzpicture}
\draw [fill=black] (0, 0) circle [radius=0.15cm];
\draw [fill=black] (10, 0) circle [radius=0.15cm];
\node [below=0.2cm] at (0, 0) {$u$};
\node [below=0.2cm] at (10, 0) {$v$};
\node at (5, 0) {$\cdots$};

\draw [thick] (2, 0) ellipse (0.5cm and 2cm);
\draw [thick] (4, 0) ellipse (0.5cm and 2cm);
\draw [thick] (6, 0) ellipse (0.5cm and 2cm);
\draw [thick] (8, 0) ellipse (0.5cm and 2cm);

\node [below=0.2cm, align=center] at (1.8, -2) {First layer\\ size is\\ $|F| = f$};
\draw [fill=black] (2, 0) circle [radius=0.15cm];
\draw [fill=black] (2, 1) circle [radius=0.15cm];
\draw [fill=black] (2, -1) circle [radius=0.15cm];
\draw [thick] (0, 0) -- (2, 0);
\draw [thick] (0, 0) -- (2, 1);
\draw [thick] (0, 0) -- (2, -1);

\draw [thick] (2, 1) -- (4, 1.5);
\draw [thick] (2, 1) -- (4, 1);
\draw [thick] (2, 1) -- (4, 0.5);
\draw [thick] (2, 1) -- (4, 0);
\draw [thick] (2, 1) -- (4, -0.5);
\draw [thick] (2, 1) -- (4, -1);
\draw [thick] (2, 1) -- (4, -1.5);

\draw [thick] (2, 0) -- (4, 1.5);
\draw [thick] (2, 0) -- (4, 1);
\draw [thick] (2, 0) -- (4, 0.5);
\draw [thick] (2, 0) -- (4, 0);
\draw [thick] (2, 0) -- (4, -0.5);
\draw [thick] (2, 0) -- (4, -1);
\draw [thick] (2, 0) -- (4, -1.5);

\draw [thick] (2, -1) -- (4, 1.5);
\draw [thick] (2, -1) -- (4, 1);
\draw [thick] (2, -1) -- (4, 0.5);
\draw [thick] (2, -1) -- (4, 0);
\draw [thick] (2, -1) -- (4, -0.5);
\draw [thick] (2, -1) -- (4, -1);
\draw [thick] (2, -1) -- (4, -1.5);

\node [below=0.2cm, align=center] at (6, -2) {$O(fD^2)$ walks\\ from $u$ to\\ third layer};

\draw [fill=black] (6, 1.5) circle [radius=0.15cm];
\draw [fill=black] (6, 1) circle [radius=0.15cm];
\draw [fill=black] (6, 0.5) circle [radius=0.15cm];
\draw [fill=black] (6, 0) circle [radius=0.15cm];
\draw [fill=black] (6, -0.5) circle [radius=0.15cm];
\draw [fill=black] (6, -1) circle [radius=0.15cm];
\draw [fill=black] (6, -1.5) circle [radius=0.15cm];

\draw [fill=black] (8, 0) circle [radius=0.15cm];
\draw [fill=black] (8, 1) circle [radius=0.15cm];
\draw [fill=black] (8, -1) circle [radius=0.15cm];

\draw [thick] (6, 0) -- (8, 1);
\draw [thick] (6, 0) -- (8, 0);
\draw [thick] (6, 0) -- (8, -1);
\draw [thick] (10, 0) -- (8, 1);
\draw [thick] (10, 0) -- (8, 0);
\draw [thick] (10, 0) -- (8, -1);

\node [below=0.2cm, align=center] at (10, -2) {$f$ ways\\ to complete a\\ walk from the\\ third layer to $v$};

\end{tikzpicture}

\caption{Assuming a maximum of $O(f)$ $2$-paths between any two nodes, each edge $(u, v)$ added to $H$ completes at most $O(f^2 D^2)$ $u \leadsto v$ $3$-walks.  In this picture we have again drawn the separating node set $F$ as coinciding with the first layer of the graph.}
\end{center}
\end{figure}
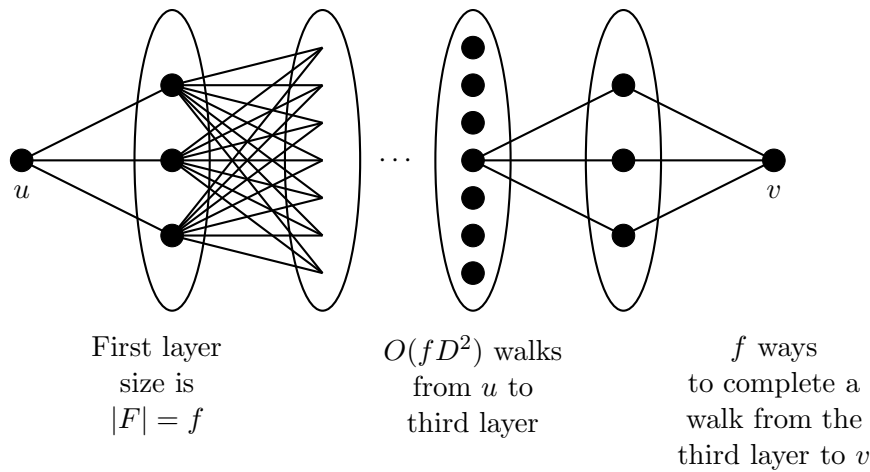

This is precisely the bound on $c_6$ needed to prove Theorem \ref{thm:upper-bounds} for $k=3$!
Of course, it ``only'' remains to justify our assumption that there are no more than $O(f)$ $2$-walks between any two nodes, while we have only proved this statement in the average case.
This requires new machinery.
At a high level, we accomplish this by carefully restricting our attention to a subset of the paths in $H$; in particular, we ``block'' $2$-paths that exceed the desired $O(f)$ bound, and then use something morally similar to the above proof to show that each edge $(u, v)$ completes only $O(f^2 D^2)$ ``unblocked'' closed $6$-walks.
It is a careful balancing act to define these terms in the right way while still having an analog of Lemma \ref{lem:4cyc-cs} that applies when only ``unblocked'' closed $6$-walks are considered.
However, with enough precision, it can be done. 

Some additional technical work not mentioned here is required to avoid a dependence of the form e.g. $O(\text{poly}(k))$ in our upper bound.
We defer a discussion of this point to the appendix.

\section{Overview: Lower Bounds for Larger $k$}


Details of these proofs can be found in Appendix~\ref{app:lbs}.  The proof of Theorem~\ref{thm:vertex-lbs} in its full generality is essentially the obvious generalization of the $k=2$ case.  We simply have to start with a graph $G$ with arises from the girth conjecture rather than from Lemma~\ref{lem:4-girth}.  We then use the same construction (making $f/2$ copies of each node, and turning each edge into a complete bipartite graph between the associated copies).  For any edge $\{(u,i), (v,j)\}$ in this graph, if the spanner does not include this edge and if we fail all other copies of $u$ and $v$, the shortest path from $(u,i)$ to $(v,j)$ in the spanner must correspond to a walk from $u$ to $v$ in $G$ which does not use the edge $\{u,v\}$.  But by construction $G$ has large girth, so all such walks are too long.  Hence $\{(u,i), (v,j)\}$ must be in the spanner.

In the case of edge faults (Theorem~\ref{thm:edge-lbs}), when $k > 2$ the same construction unfortunately no longer works.  To see why, consider as before some edge $\{(u,i), (v,j)\}$ which is in the graph but not the spanner.  If we use the same fault set $F$ as in the $k=2$ case, there will still be a path in $G \setminus F$ of length $5$ from $(u,i)$ to $(v,j)$ of the form $(u,i), (x, a), (u,b), (v, c), (y, d), (v, j)$.  Here $x$ is any neighbor of $u$ in $G$ and $y$ is any neighbor of $v$ and $a,b,c,d$ are arbitrary indices.  This will be the case even if the original graph $G$ has large girth.  

To get around this problem, we make even fewer copies of each vertex.  In particular, we will make only $t = \lfloor \sqrt{f} \rfloor$ copies of each vertex.  Now, instead of our fault set being all edges from $(u,i)$ to copies of $v$ and all edges from $(v,j)$ to copies of $u$, we can simply remove all edges between copies of $u$ and copies of $v$ except for $\{(u,i), (v,j)\}$.  Now the same logic as before implies that this edge must be in the spanner, but when we analyze the size of the graph we get only $\Omega\left(f^{1/2 - 1/(2k)} |V_{G'}|^{1+1/k}\right)$.

\bibliography{references}

\begin{thebibliography}{10}

\bibitem{abam2009region}
Mohammad~Ali Abam, Mark De~Berg, Mohammad Farshi, and Joachim Gudmundsson.
\newblock Region-fault tolerant geometric spanners.
\newblock {\em Discrete \& Computational Geometry}, 41(4):556--582, 2009.

\bibitem{abraham2016forbidden}
Ittai Abraham, Shiri Chechik, Cyril Gavoille, and David Peleg.
\newblock Forbidden-set distance labels for graphs of bounded doubling
  dimension.
\newblock {\em ACM Transactions on Algorithms (TALG)}, 12(2):22, 2016.

\bibitem{AlthoferDDJS:93}
Ingo Alth{\"{o}}fer, Gautam Das, David~P. Dobkin, Deborah Joseph, and
  Jos{\'{e}} Soares.
\newblock On sparse spanners of weighted graphs.
\newblock {\em Discrete {\&} Computational Geometry}, 9:81--100, 1993.

\bibitem{awerbuch1991cient}
Baruch Awerbuch, Alan Baratz, and David Peleg.
\newblock Efficient broadcast and light-weight spanners.
\newblock {\em Unpublished manuscript, November}, 1991.

\bibitem{awerbuch1990network}
Baruch Awerbuch and David Peleg.
\newblock Network synchronization with polylogarithmic overhead.
\newblock In {\em Foundations of Computer Science, 1990. Proceedings., 31st
  Annual Symposium on}, pages 514--522. IEEE, 1990.

\bibitem{BaswanaCR:16}
Surender Baswana, Keerti Choudhary, and Liam Roditty.
\newblock Fault tolerant subgraph for single source reachability: generic and
  optimal.
\newblock In {\em Proceedings of the 48th Annual {ACM} {SIGACT} Symposium on
  Theory of Computing, {STOC} 2016, Cambridge, MA, USA, June 18-21, 2016},
  pages 509--518, 2016.

\bibitem{baswana2012single}
Surender Baswana, Utkarsh Lath, and Anuradha~S Mehta.
\newblock Single source distance oracle for planar digraphs avoiding a failed
  node or link.
\newblock In {\em Proceedings of the twenty-third annual ACM-SIAM symposium on
  Discrete Algorithms}, pages 223--232. SIAM, 2012.

\bibitem{BernsteinK:08}
Aaron Bernstein and David Karger.
\newblock Improved distance sensitivity oracles via random sampling.
\newblock In {\em SODA'08: Proceedings of the nineteenth annual ACM-SIAM
  symposium on Discrete algorithms}, pages 34--43, Philadelphia, PA, USA, 2008.
  Society for Industrial and Applied Mathematics.

\bibitem{bernstein2009nearly}
Aaron Bernstein and David Karger.
\newblock A nearly optimal oracle for avoiding failed vertices and edges.
\newblock In {\em Proceedings of the forty-first annual ACM symposium on Theory
  of computing}, pages 101--110. ACM, 2009.

\bibitem{BiloGGSP:15}
Davide Bil{\`{o}}, Fabrizio Grandoni, Luciano Gual{\`{a}}, Stefano Leucci, and
  Guido Proietti.
\newblock Improved purely additive fault-tolerant spanners.
\newblock In {\em Algorithms - {ESA} 2015 - 23rd Annual European Symposium,
  Patras, Greece, September 14-16, 2015, Proceedings}, pages 167--178, 2015.

\bibitem{Bilo14-esa}
Davide Bil{\`{o}}, Luciano Gual{\`{a}}, Stefano Leucci, and Guido Proietti.
\newblock Fault-tolerant approximate shortest-path trees.
\newblock In {\em Algorithms - {ESA} 2014 - 22th Annual European Symposium,
  Wroclaw, Poland, September 8-10, 2014. Proceedings}, pages 137--148, 2014.

\bibitem{Bilo16-esa}
Davide Bil{\`{o}}, Luciano Gual{\`{a}}, Stefano Leucci, and Guido Proietti.
\newblock Compact and fast sensitivity oracles for single-source distances.
\newblock In {\em 24th Annual European Symposium on Algorithms, {ESA} 2016,
  August 22-24, 2016, Aarhus, Denmark}, pages 13:1--13:14, 2016.

\bibitem{BiloG0P16}
Davide Bil{\`{o}}, Luciano Gual{\`{a}}, Stefano Leucci, and Guido Proietti.
\newblock Multiple-edge-fault-tolerant approximate shortest-path trees.
\newblock In {\em 33rd Symposium on Theoretical Aspects of Computer Science,
  {STACS} 2016, February 17-20, 2016, Orl{\'{e}}ans, France}, pages
  18:1--18:14, 2016.

\bibitem{BGPV17}
G.~Bodwin, F.~Grandoni, M.~Parter, and V.~{Vassilevska Williams}.
\newblock Preserving distances in very faulty graphs.
\newblock In {\em Proc. of 44th International Colloquium on Automata,
  Languages, and Programming (ICALP), to appear}, 2017.

\bibitem{bose2013robust}
Prosenjit Bose, Vida Dujmovic, Pat Morin, and Michiel Smid.
\newblock Robust geometric spanners.
\newblock {\em SIAM Journal on Computing}, 42(4):1720--1736, 2013.

\bibitem{BraunschvigCPS:15}
Gilad Braunschvig, Shiri Chechik, David Peleg, and Adam Sealfon.
\newblock Fault tolerant additive and ({\(\mu\)}, {\(\alpha\)})-spanners.
\newblock {\em Theor. Comput. Sci.}, 580:94--100, 2015.

\bibitem{BringmannGSW16}
Karl Bringmann, Fabrizio Grandoni, Barna Saha, and Virginia~Vassilevska
  Williams.
\newblock Truly sub-cubic algorithms for language edit distance and rna-folding
  via fast bounded-difference min-plus product.
\newblock In {\em {IEEE} 57th Annual Symposium on Foundations of Computer
  Science, {FOCS} 2016, 9-11 October 2016, Hyatt Regency, New Brunswick, New
  Jersey, {USA}}, pages 375--384, 2016.

\bibitem{chan2015sparse}
T-H~Hubert Chan, Mingfei Li, and Li~Ning.
\newblock Sparse fault-tolerant spanners for doubling metrics with bounded
  hop-diameter or degree.
\newblock {\em Algorithmica}, 71(1):53--65, 2015.

\bibitem{chechik2011fault}
Shiri Chechik.
\newblock Fault-tolerant compact routing schemes for general graphs.
\newblock {\em Automata, Languages and Programming}, pages 101--112, 2011.

\bibitem{ChechikCFK17}
Shiri Chechik, Sarel Cohen, Amos Fiat, and Haim Kaplan.
\newblock $1+\epsilon$-approximate f-sensitive distance oracles.
\newblock In {\em Proceedings of the Twenty-Eighth Annual {ACM-SIAM} Symposium
  on Discrete Algorithms, {SODA} 2017, Barcelona, Spain, Hotel Porta Fira,
  January 16-19}, pages 1479--1496, 2017.

\bibitem{ChechikLPR:10}
Shiri Chechik, Michael Langberg, David Peleg, and Liam Roditty.
\newblock Fault tolerant spanners for general graphs.
\newblock {\em {SIAM} J. Comput.}, 39(7):3403--3423, 2010.

\bibitem{czumaj2004fault}
Artur Czumaj and Hairong Zhao.
\newblock Fault-tolerant geometric spanners.
\newblock {\em Discrete \& Computational Geometry}, 32(2):207--230, 2004.

\bibitem{demetrescu2008oracles}
Camil Demetrescu, Mikkel Thorup, Rezaul~Alam Chowdhury, and Vijaya
  Ramachandran.
\newblock Oracles for distances avoiding a failed node or link.
\newblock {\em SIAM Journal on Computing}, 37(5):1299--1318, 2008.

\bibitem{DinitzK:11}
Michael Dinitz and Robert Krauthgamer.
\newblock Fault-tolerant spanners: better and simpler.
\newblock In {\em Proceedings of the 30th Annual {ACM} Symposium on Principles
  of Distributed Computing, {PODC} 2011, San Jose, CA, USA, June 6-8, 2011},
  pages 169--178, 2011.

\bibitem{duan2017connectivity}
Ran Duan and Seth Pettie.
\newblock Connectivity oracles for graphs subject to vertex failures.
\newblock In {\em Proceedings of the Twenty-Eighth Annual ACM-SIAM Symposium on
  Discrete Algorithms}, pages 490--509. SIAM, 2017.

\bibitem{duan2016improved}
Ran Duan and Tianyi Zhang.
\newblock Improved distance sensitivity oracles via tree partitioning.
\newblock {\em arXiv preprint arXiv:1605.04491}, 2016.

\bibitem{elkin2008lower}
Michael Elkin, Yuval Emek, Daniel~A Spielman, and Shang-Hua Teng.
\newblock Lower-stretch spanning trees.
\newblock {\em SIAM Journal on Computing}, 38(2):608--628, 2008.

\bibitem{erdHos1964extremal}
Paul Erd{\H{o}}s.
\newblock Extremal problems in graph theory.
\newblock In {\em IN “THEORY OF GRAPHS AND ITS APPLICATIONS,” PROC. SYMPOS.
  SMOLENICE}. Citeseer, 1964.

\bibitem{gavoille2008compact}
Cyril Gavoille and Andrew Twigg.
\newblock Compact forbidden-set routing on planar graphs.
\newblock {\em Unpublished as yet}, 2008.

\bibitem{grandoni2012improved}
Fabrizio Grandoni and Virginia~Vassilevska Williams.
\newblock Improved distance sensitivity oracles via fast single-source
  replacement paths.
\newblock In {\em Foundations of Computer Science (FOCS), 2012 IEEE 53rd Annual
  Symposium on}, pages 748--757. IEEE, 2012.

\bibitem{kapralov2012spectral}
Michael Kapralov and Rina Panigrahy.
\newblock Spectral sparsification via random spanners.
\newblock In {\em Proceedings of the 3rd Innovations in Theoretical Computer
  Science Conference}, pages 393--398. ACM, 2012.

\bibitem{khanna2010approximate}
Neelesh Khanna and Surender Baswana.
\newblock Approximate shortest paths avoiding a failed vertex: Optimal size
  data structures for unweighted graph.
\newblock In {\em 27th International Symposium on Theoretical Aspects of
  Computer Science-STACS 2010}, pages 513--524, 2010.

\bibitem{levcopoulos1998efficient}
Christos Levcopoulos, Giri Narasimhan, and Michiel Smid.
\newblock Efficient algorithms for constructing fault-tolerant geometric
  spanners.
\newblock In {\em Proceedings of the thirtieth annual ACM symposium on Theory
  of computing}, pages 186--195. ACM, 1998.

\bibitem{levcopoulos2002improved}
Christos Levcopoulos, Giri Narasimhan, and Michiel Smid.
\newblock Improved algorithms for constructing fault-tolerant spanners.
\newblock {\em Algorithmica}, 32(1):144--156, 2002.

\bibitem{lukovszki1999new}
Tamas Lukovszki.
\newblock New results on fault tolerant geometric spanners.
\newblock {\em Algorithms and Data Structures}, pages 774--774, 1999.

\bibitem{Mat96}
Ji{\v{r}}{\'\i} Matou{\v{s}}ek.
\newblock On the distortion required for embedding finite metric spaces into
  normed spaces.
\newblock {\em Israel Journal of Mathematics}, 93(1):333--344, 1996.

\bibitem{morrissaxton}
R.~Morris and D.~Saxton.
\newblock The number of ${C}_{2k}$-free graphs.
\newblock {\em Adv. Math.}, 298:534--580, 2016.

\bibitem{Parter:15}
Merav Parter.
\newblock Dual failure resilient {BFS} structure.
\newblock In {\em Proceedings of the 2015 {ACM} Symposium on Principles of
  Distributed Computing, {PODC} 2015, Donostia-San Sebasti{\'{a}}n, Spain, July
  21 - 23, 2015}, pages 481--490, 2015.

\bibitem{parter2016fault}
Merav Parter.
\newblock Fault-tolerant logical network structures.
\newblock In {\em The Distributed Computing Column}. Stefan Schmid, 2016.

\bibitem{ParterP:13}
Merav Parter and David Peleg.
\newblock Sparse fault-tolerant {BFS} trees.
\newblock In {\em Algorithms - {ESA} 2013 - 21st Annual European Symposium,
  Sophia Antipolis, France, September 2-4, 2013. Proceedings}, pages 779--790,
  2013.

\bibitem{ParterP:14}
Merav Parter and David Peleg.
\newblock Fault tolerant approximate {BFS} structures.
\newblock In {\em Proceedings of the Twenty-Fifth Annual {ACM-SIAM} Symposium
  on Discrete Algorithms, {SODA} 2014, Portland, Oregon, USA, January 5-7,
  2014}, pages 1073--1092, 2014.

\bibitem{peleg2000distributed}
David Peleg.
\newblock {\em Distributed computing: a locality-sensitive approach}.
\newblock SIAM, 2000.

\bibitem{peleg2009good}
David Peleg.
\newblock As good as it gets: Competitive fault tolerance in network
  structures.
\newblock In {\em Symposium on Self-Stabilizing Systems}, pages 35--46.
  Springer, 2009.

\bibitem{PelegS:89}
David Peleg and Alejandro~A. Sch{\"{a}}ffer.
\newblock Graph spanners.
\newblock {\em Journal of Graph Theory}, 13(1):99--116, 1989.

\bibitem{PelegU:89}
David Peleg and Jeffrey~D. Ullman.
\newblock An optimal synchronizer for the hypercube.
\newblock {\em {SIAM} J. Comput.}, 18(4):740--747, 1989.

\bibitem{PelegU:89-routing}
David Peleg and Eli Upfal.
\newblock A trade-off between space and efficiency for routing tables.
\newblock {\em J. {ACM}}, 36(3):510--530, 1989.

\bibitem{solomon2014hierarchical}
Shay Solomon.
\newblock From hierarchical partitions to hierarchical covers: Optimal
  fault-tolerant spanners for doubling metrics.
\newblock In {\em Proceedings of the 46th Annual ACM Symposium on Theory of
  Computing}, pages 363--372. ACM, 2014.

\bibitem{thorup2005approximate}
Mikkel Thorup and Uri Zwick.
\newblock Approximate distance oracles.
\newblock {\em Journal of the ACM (JACM)}, 52(1):1--24, 2005.

\bibitem{WeimannY:13}
Oren Weimann and Raphael Yuster.
\newblock Replacement paths and distance sensitivity oracles via fast matrix
  multiplication.
\newblock {\em {ACM} Transactions on Algorithms}, 9(2):14, 2013.

\bibitem{wenger1991extremal}
Rephael Wenger.
\newblock Extremal graphs with no c4's, c6's, or c10's.
\newblock {\em Journal of Combinatorial Theory, Series B}, 52(1):113--116,
  1991.

\end{thebibliography}
	\bibliographystyle{plain}

\appendix

\section{Regularizing $H$ \label{app:reg}}

Before diving into our main analysis, it will be convenient as before to assume that $H$ is ``approximately regular.''
We justify this assumption by the following argument.

%
%
%

\begin{lemma} \label{lem:k-reg}
Suppose that Theorem \ref{thm:upper-bounds} holds for all graphs $G$ whose corresponding output graphs $H$ have the property that their maximum degree is at most $c$ times their minimum degree, for some universal constant $c$.
Then Theorem \ref{thm:upper-bounds} holds in general.
\end{lemma}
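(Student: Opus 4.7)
The plan is to extract an approximately regular subgraph $H' \subseteq H$, apply the hypothesis to $H'$, and translate the resulting bound back to $H$. The key enabling observation is a subgraph-closure property of the greedy algorithm: for any subgraph $H' \subseteq H$ (with weights and edge ordering inherited from $H$), running Algorithm~\ref{alg:spanners} on $H'$ returns $H'$ itself. Indeed, if an edge $(u, v) \in H$ was added by the algorithm on $G$, then some fault set $F$ with $|F| \leq f$ witnessed $\dist_{H^{(u, v)} \setminus F}(u, v) > (2k-1) \, w(u, v)$. Since $(H')^{(u, v)} \subseteq H^{(u, v)}$, distances only grow in the smaller graph, so the same $F$ witnesses that $(u, v)$ is not $(2k-1, f)$-protected in $(H')^{(u, v)}$, forcing the algorithm on $H'$ to include it. A short induction on the edge order yields that the output on $H'$ equals $H'$, so the hypothesis of the lemma is applicable to $H'$ whenever $H'$ is approximately regular.

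It thus suffices to construct a subgraph $H' \subseteq H$ satisfying (i) $\Delta(H') \leq c \cdot \delta(H')$ for the universal constant $c$ from the hypothesis, and (ii) $\rho(H') = \Omega_k(\rho(H))$, where $\rho(J) := |E(J)| / |V(J)|^{1 + 1/k}$. I would proceed in two phases. In the first phase, iteratively delete any vertex whose degree in the current subgraph is less than $\tfrac{1+1/k}{2}$ times the current average degree; a short computation exploiting the convexity of $x \mapsto x^{1+1/k}$ shows each such deletion does not decrease $\rho$, and at termination the minimum degree is $\Omega_k(d_{\mathrm{avg}})$. In the second phase, I would control the maximum degree by restricting to a carefully chosen constant-factor-width degree window around the average, amortizing the vertex-count reduction against the discarded edges so that $\rho$ drops by only an $O_k(1)$ factor while $\Delta(H') \leq c \cdot \delta(H')$ holds. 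With such $H'$ in hand, the hypothesis gives $|E(H')| = O_k(f^{1-1/k} |V(H')|^{1+1/k})$, and rearranging using $|V(H')| \leq n$ and $\rho(H') = \Omega_k(\rho(H))$ yields the general bound $|E(H)| = O_k(f^{1-1/k} n^{1+1/k})$.

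The main obstacle I foresee is the second phase: removing a single high-degree vertex in isolation can strictly decrease $\rho$, so the window restriction must be done in a coordinated, amortized way rather than vertex-by-vertex. The plan is to use the min-degree lower bound from the first phase to argue that the edge-mass of the post-peeling graph is concentrated within an $O_k(1)$-width degree window, so restricting to that window preserves $\rho$ up to a constant depending only on $k$. Making this concentration precise, and ensuring the final $\Delta(H') / \delta(H')$ ratio meets the specific constant $c$ required by the hypothesis, is the most delicate part of the argument.
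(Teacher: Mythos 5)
Your subgraph-closure observation is correct and is a clean way to make the reduction rigorous: since $(u,v)\in H$ was added because some fault set $F$ witnessed $\dist_{H^{(u,v)}\setminus F}(u,v)>(2k-1)w(u,v)$, and $(H')^{(u,v)}\subseteq H^{(u,v)}$ only increases distances, the same $F$ witnesses non-protection in $(H')^{(u,v)}$, so the greedy algorithm on $H'$ returns all of $H'$. The paper handles this more tersely by just assuming $G=H$ and then recursing on subgraphs; your version makes explicit why that is permissible. Phase 1 (iterated low-degree peeling) is also essentially right (the threshold should really be slightly below $\tfrac{1+1/k}{2}\,d_{\mathrm{avg}}$ to make $(1-1/n)^{1+1/k}\le 1-d/m$ exact, but this is a minor constant tweak).

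The genuine gap is in Phase 2. You assert that after peeling, the edge mass is concentrated in an $O_k(1)$-width degree window around the average and so a single window restriction produces the approximately regular subgraph. That claim is not true as stated, and the obstruction is precisely what forces the paper into a recursion. After Phase 1, let $\delta\ge\alpha d_{\mathrm{avg}}$ with $\alpha=\tfrac{1+1/k}{2}$, and set $S=\{v:\deg(v)>C\,d_{\mathrm{avg}}\}$. One can indeed show (using $\sum_{v\notin S}\deg(v)\ge(n-|S|)\alpha\,d_{\mathrm{avg}}$ and $|S|\le n/C$) that for $C=2(k+1)$ the number of edges inside $V\setminus S$ is at least $m/(2k)$, so restricting to $V\setminus S$ loses only an $O_k(1)$ factor in density while capping the max degree at $C\,d_{\mathrm{avg}}$. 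But this restriction destroys the minimum-degree guarantee: a vertex of $V\setminus S$ may have sent most or all of its $\ge\alpha d_{\mathrm{avg}}$ edges into $S$, so its new degree can be arbitrarily small (even zero). You then have to re-run Phase 1, and after re-peeling the new average degree $d'_{\mathrm{avg}}$ is no longer controlled against the old one; indeed $\rho$ not decreasing only gives $d'_{\mathrm{avg}}\ge d_{\mathrm{avg}}\,(n'/n)^{1/k}/(2k)$, which is vacuous when the re-peeling discards many vertices. So the ratio $\Delta'/\delta'\le C\,d_{\mathrm{avg}}/(\alpha\,d'_{\mathrm{avg}})$ is not $O_k(1)$ in general, and you are forced to repeat Phase 2, and so on. Each round costs a factor $\Theta(k)$ in density, so you need a termination argument bounding the depth before the density loss overwhelms the $\omega_k(1)$ surplus of the counterexample.

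This is exactly what the paper's recursion supplies. The paper splits the edges into $E_A,E_B,E_{AB}$ and recurses in the non-regular cases, tracking the invariant that the average degree $D^*$ can only drop by a factor $6$ per round while the vertex count drops by a factor $\ge c/2$; combined with $D^*\le n^*$ and $D\ge n^{1/k}$, this bounds the recursion depth by $O(\log n/\log c)$, and choosing $c=2^{\Theta(k)}$ keeps the cumulative loss a factor $O_k(1)$ and leaves at least $n^{1/(2k)}$ nodes. Your sketch has no analogue of this depth/loss accounting, so as written the argument is incomplete; to repair it you would need to replace the one-shot ``concentration'' step with an iterated alternation of the two phases and prove a termination bound of the same flavor.
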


\begin{proof}
We shall prove this lemma by showing the contrapositive.
We suppose that $G = (V, E)$ is an $n$-node graph that serves as a counterexample to Theorem \ref{thm:upper-bounds} (i.e. its corresponding output graph $H$ has $n^{1 + 1/k} f^{1 - 1/k} \omega(1)^k$ edges).
Our goal is then to show that there is a subgraph $G' \subseteq G$ on at least $n^{1/(2k)}$ nodes that also serves as a counterexample to Theorem \ref{thm:upper-bounds}, and the maximum degree of $G'$ is larger than its minimum degree by a factor of $O_k(1)$.
Hence, in the proof of Theorem \ref{thm:upper-bounds} we may assume this approximate regularity condition.
We will then rule out the existence of a family of $G'$, and this rules out the general existence of $G$.
Note that the constraint that $G'$ has at least $n^{1/(2k)}$ nodes is used to ensure that an infinite family of $G$ implies an infinite family of $G'$; any super-constant function would work equally well for our application of this lemma.

First, we may assume that $G = H$; if our algorithm discards any edge in $G$, we may simply delete this edge from $G$ and it is clear that this deletion will not change the corresponding output graph $H$.
Let $n$ be the number of nodes in $G=H$ and let $D$ be its average degree. 
Partition the nodes of $G$ into two sets: the set $A$ of nodes that have degree $cD$ or less, and the set $B$ of nodes that have degree more than $cD$ (for some constant $c = c_k$ that will be chosen later).
By Markov's inequality we have $|B| \le n/c$.
Partition the edges of $G$ into three sets: $E_A$ where both endpoints are in $A$, $E_B$ where both endpoints are in $B$, and $E_{AB}$ that has one endpoint in each.
Compare the sizes of $E_A, E_B, E_{AB}$; keep all edges in the largest set and discard all edges in the other two (ties may be broken arbitrarily).
We now split into cases based on which set survived.

\paragraph{(a) $E_A$ Survived.}
We now have a maximum degree of $cD$ in the remaining graph, while its average degree is still $\Omega(D)$ (since we have only discarded a constant fraction of the edges).
Let $D'$ be the new average degree of $E_A$, and repeatedly delete all nodes from $G$ that have degree $D'/4$ or less until all nodes have degree at least $D' / 4$.
Note that only $n \cdot D'/4$ edges (which is at most half the remaining edges in $G$) will be removed in this way.
The remaining subgraph $G'$ is now sparser than the original graph $G$ by only a constant factor and its maximum and minimum degrees are both $\Theta(D)$.
Additionally, the number of nodes $n'$ in $G'$ is $n' = \Omega(|E|)/\Theta(D) = \Omega(n)$.
Thus $G'$ satisfies the lemma.

\paragraph{(b) $E_B$ Survived.}
We now obtain a subgraph $G'$ by deleting all nodes in $A$ (which now have degree $0$).
The average degree in the remaining subgraph $G'$ is at least $cD/3$, and it has $1/c$ as many nodes as before.
It is not necessarily the case that $G'$ is approximately regular, but we may now recurse the argument on $G'$ to enforce this property (it remains to be proved that this recursion terminates before removing too many nodes from $G$).
Note that $G'$ has higher average degree and fewer nodes than $G$, so it still serves as a counterexample graph to our claimed bound.

\paragraph{(c) $E_{AB}$ Survived.}
The average degree of the nodes in $A$ is now at least $D/6$, since we have deleted at most $1/3$ of the edges since setting $D$ and (since the graph is now bipartite) each edge is incident on one node in $A$.
Our next step is to delete all nodes from $A$ except for the $|B|$ nodes in $|A|$ with the highest remaining degree.
The average degree in the remaining subgraph $G'$ is then still at least $D/6$.
We also have $2|B| \le \frac{2}{c} \cdot n$ nodes remaining in $G'$.
We then once again recurse the analysis on the remaining subgraph (and we will prove shortly that this recursion terminates before removing too many nodes).\\

In the latter two cases, we recurse the analysis on a subgraph $G'$ of the original graph.
We will now show that, if $c$ is chosen to be suitably large, then the recursion must eventually terminate in case (a) while the graph still has $\text{poly}(n)$ nodes (where $n$ is the number of nodes in the original graph $G$).
Naturally, throughout the recursion, the average degree $D^*$ in the subgraph being considered cannot exceed the number of nodes $n^*$ in that subgraph.
However, after $k$ rounds of recursion that avoid case (a), we have
$$D^* \ge \frac{D}{6^k} \ge  \frac{n^{1/k}}{6^k}$$
and
$$n^* \le n \cdot (2/c)^k$$
We thus have
\begin{align*}
\frac{n^{1/k}}{6^k} &\le n \cdot (2/c)^k\\
(c/12)^k &\le n^{1 - 1/k}\\
k \log(c/12) &\le (1 - 1/k) \log n \le \log n
\end{align*}
Thus, by choosing $c$ sufficiently large, we have that $k \le \frac{\log n}{c'}$ (for some constant $c'=\log(c/12)$ that can be made arbitrarily large by choice of $c$).
This means that the recursion bottoms out at depth $\frac{\log n}{c'}$, at which point the average degree of the graph is
$$D^* \ge \frac{n^{1/k}}{3^{\log n / c'}} \ge \frac{n^{1/k}}{n^{1/c''}} = n^{1/k - 1/c''}$$
where $c''$ is another constant that can be made arbitrarily large by pushing $c'$ arbitrarily large.
Choosing $c'' \ge 2k$ we have $D^* \ge n^{1/(2k)}$ and so the recursion must terminate while the graph still has at least $n^{1/(2k)}$ nodes, as claimed.

 Now if we set $c''=2k$, then we are setting $c=12\cdot 9^k$. If the recursion bottoms out after $i\leq k$ levels in case (a), then after applying   case (a), the average degree is $D^{*}\geq D/(2\cdot 6^i)$ and the number of nodes is $n^*\leq n(2/c)^i$. We will explicitly show that the subgraph $G^*$ we have obtained is a counterexample, provided $G$ was. We know that in $G$, the number of edges is $Dn/2\geq n^{1+1/k}f^{1-1/k}Q$ where $Q\geq\omega(1)$. We will show that in $G^*$, the number of edges is $\geq (n^*)^{1+1/k}f^{1-1/k}Q$, and hence $G^*$ is also a counterexample.

First, notice that $n^*\leq n(2/c)^i = n(1/(6\cdot 9^k)^i \leq n/(2^k\cdot 6^{ki})$. Thus, $(n^*)^{1/k}\leq n^{1/k}/(2\cdot 6^i)$. Now, since $D\geq 2n^{1/k}f^{1-1/k}Q$ (since $G$ is a counter example), we get that $D^*\geq D/(2\cdot 6^i)\geq 2f^{1-1/k} Q n^{1/k}/(2\cdot 6^i)\geq 2f^{1-1/k} (n^*)^{1/k} Q$, and hence the number of edges in $G^*$ is $D^* n^*/2\geq f^{1-1/k} (n^*)^{1+1/k} Q$, and so $G^*$ is also a counterexample of nontrivial size.

\end{proof}

Hence, it suffices to refute the existence of ``approximately regular'' graphs $G'$ that violate Theorem \ref{thm:upper-bounds}.
By contrapositive of Lemma \ref{lem:k-reg}, this would imply that no \emph{general} graph may violate Theorem \ref{thm:upper-bounds}.
Note that this argument implies Lemma \ref{lem:3-reg} by plugging in $k=2$.

As we proceed, we will assume that the maximum and minimum degrees of the input graph $G$ differ by a factor of $O_k(1)$.
In fact, as before we may assume that $G = H$, so we may assume this same approximate regularity property for $H$.
Moreover, we will assume that all nodes in $H$ have degree in the interval $[D, \psi_k D]$ for some parameter $D$ that we carry into the following proofs. From the argument in Lemma \ref{lem:k-reg}, we have that $\psi_k\leq 2^{O(k)}$.

\section{Upper Bounds for $k \ge 3$ \label{app:k-ub}}

We now prove Theorem \ref{thm:upper-bounds} for larger $k$.
For convenience, we will assume that all edges in $H$ have unique weights, and are thus considered by our algorithm in a consistent order (or it suffices that ties between equally-weighted edges are broken in some consistent fashion).

\subsection{Definitions, Notation, and some Intuition}
First let us introduce some new notation related to walks in $H$.
We write $\overline{w}$ to denote the reverse of a walk $w$, we write $w^i$ to denote the $i^{th}$ node in the walk $i$ (indexing from $0$), and for two walks $w_1, w_2$ such that the last node of $w_1$ equals the first node of $w_2$, we write $w_1 w_2$ to denote their concatenation in the natural way.

As mentioned in Section \ref{sec:ub-overview}, it is easy to prove a suitable generalization of Lemma \ref{lem:4cyc-cs} but hard to prove a suitable generalization of Lemma \ref{lem:4cyc-completion}.
The difficulty in proving Lemma \ref{lem:4cyc-completion} for $k > 2$ is in handling pair of nodes $u, v$ for which the number of $u \leadsto v$ $k$-walks is much larger than the average number of $k$-walks over all nodes pairs in the graph.
Our solution is to ``block'' these irregular parts of the graph, essentially throwing away a constant fraction of the available walks in the graph in exchange for a guarantee of ``regularity'' on the ones that survive.
Specifically, we will shortly define a set $\bee$ of walks in $H$, where each $b \in \bee$ is called a \emph{blockade}.
A walk $w$ in $H$ is \emph{blocked} by some $\bee$ if there is a blockade $b \in \bee$ such that $b$ or $\overline{b}$ is a sub-walk of $w$ (we use the notation $b \subseteq w$ or $\overline{b} \subseteq w$).


After appropriately designing $\bee$, it is tempting to generalize Lemma \ref{lem:4cyc-completion} by counting unblocked closed walks.
A technical detail of introducing blocks is that this is no longer quite the right object to count.
Let us define:
\begin{definition}
An \emph{$i$-walk meet} is an ordered pair of (possibly identical) $i$-walks with the same start and end points.
\end{definition}
The proof of Lemma \ref{lem:4cyc-completion} works by observing that the number of $2$-walk meets is the same as the number of closed $4$-walks, up to constant factors.
However, it is not necessarily the case that the number of \emph{unblocked} $i$-walk meets is within a constant factor of the number of \emph{unblocked} closed $2i$-walks.
Indeed, the number of $i$-walk meets can be much larger.
This happens because there are $2i$ different $i$-walk meets corresponding to each closed $2i$-walk, and one can imagine that some of these are blocked but others are unblocked.
With this in mind, our proof works by counting $i$-walk meets directly rather than passing through any attempt to count closed walks, as before.
\begin{figure}
\begin{center}
\begin{tikzpicture}
\draw [thick] (0, 0) circle [radius=3cm];
\draw [fill=black] (0, -3) circle [radius=0.15cm];
\node [below=0.15cm] at (0, -3) {$x$};
\draw [fill=black] (-3, 0) circle [radius=0.15cm];
\node [left=0.15cm] at (-3, 0) {$a$};
\draw [fill=black] (0, 3) circle [radius=0.15cm];
\node [above=0.15cm] at (0, 3) {$b$};
\draw [fill=black] (3, 0) circle [radius=0.15cm];
\node [right=0.15cm] at (3, 0) {$c$};

\draw [red,ultra thick,domain=0:180] plot ({3*cos(\x)}, {3*sin(\x)});
\draw [blue,ultra thick,domain=265:95, ->] plot ({2.7*cos(\x)}, {2.7*sin(\x)});
\draw [blue,ultra thick,domain=-85:85, ->] plot ({2.7*cos(\x)}, {2.7*sin(\x)});

\end{tikzpicture}
\end{center}
\caption{In this picture, the closed $4$-walk $w$ starting and ending at $x$ is blocked by the path $(a, b, c) \in \bee$, but both paths in the $2$-walk meet $[(x, a, b), (x, c, b)]$ associated with it are still unblocked.  This explains why the number of unblocked $2$-walk meets can potentially be much larger than the number of unblocked closed $4$-walks.}
\end{figure}
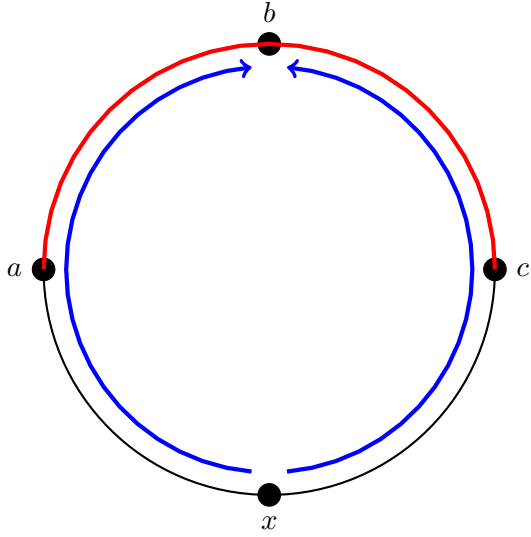

We shall write $W^{\bee}, X^{\bee}$ to denote the set of walks in $H$ that are unblocked and blocked by $\bee$, respectively.
We will frequently omit the superscript when it is simply $\bee$ (we never refer to $W$ in the absence of a blockade set).
We also use the following modifiers on the sets $\bee, W, X$:
we denote by $\bee_i, \bee_{\le i}$ the subset of walks in $\bee$ of length $i$ or length at most $i$, respectively, and we denote by $\bee[s \leadsto t]$ to denote the subset of walks in $\bee$ with endpoints $s, t$ (and similar notation is used on $W, X$).
Note that $X_i^{\bee}$ is not to be confused with $\bee_i$.

Similarly, we denote by $M_i^{\bee}$ is the set of all unblocked $i$-walk meets in $H$, and $M_i^{\bee}[s \leadsto t]$ is the subset of these where both walks in the meet start at $s$ and end at $t$.
We will also frequently suppress the superscript, although there is always an implicit $\bee$.

Our goal is to establish the following inequality:
$$\Theta(D)^{2k} \leq \left|M_k^{\bee_{\leq k-1}}\right| \leq \Theta_{k}\left(\left|E_H\right|\cdot (fD)^{k-1}\right)$$
which will then imply an upper bound on $|E_H|$ by some straightforward algebra.
The left-hand inequality is shown in Lemmas \ref{lem:unblocked-count} and \ref{lem:kcyc-cs}. 
The high level idea here is that, if we only block a small enough constant fraction (dependent on $k$) fraction of the $i$-walks for all $i\leq k$, then the number of unblocked $k$-walk meets is still nearly as large as one would expect without blockades (using the Cauchy-Schwartz Inequalty).
The main technical effort goes into the right-hand inequality.
This is proved using a delicate interlacing inductive argument, in which two complementary counting arguments are proved for the case $i=2$ and then used to boost each other up to $i=k$.
Specifically, we show an interesting interplay between (1) bounding the number of 
unblocked $i$-walk meets $M_i^{\bee\leq i-1}$ and (2) choosing $i$-length blockades to bound the number of unblocked $i$-walks $W_i^{\bee\leq i}[u \leadsto v]$ between any given pair of nodes $u,v$ (Lemma  \ref{lem:building-blocks}).
The latter bound is then used to bounding the number of unblocked $i+1$-walk meets $M_{i+1}^{\bee\leq i}$, and so on.

\subsection{Lower Bound for $|M_k|$ (Generalization of Lemma \ref{lem:4cyc-cs})}

The following lemma generalizes the fact used in the $k=2$ setting that we have $\Theta(nD^2)$ $2$-walks in $H$ to consider.
\begin{lemma} \label{lem:unblocked-count}
There is some $\phi_k > 0$ (independent of $n, f$) such that the following property holds: if $\bee_i$ contains at most a $\phi_k$ fraction of all $i$-walks in $H$ for all $i$, then
$$|W_i| \ge n \Theta(D)^i \quad \text{for all } i \le k.$$
\end{lemma}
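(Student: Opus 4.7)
The plan is a direct union-bound argument. By the approximate-regularity assumption, every vertex of $H$ has degree in $[D, \psi_k D]$ with $\psi_k = 2^{O(k)}$, so the total number of $i$-walks in $H$ (counted by choosing a start vertex and then $i$ neighbors in succession) lies between $nD^i$ and $n(\psi_k D)^i$; this range is already $n\Theta(D)^i$. It therefore suffices to show that, for a sufficiently small constant $\phi_k$, at most half of these walks are blocked by $\bee_{\le i}$.

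For each blockade $b \in \bee_j$ with $j \le i$, I would bound the number of $i$-walks that contain $b$ (or its reverse $\overline{b}$) as a contiguous sub-walk. Such a walk is determined by three independent choices: a starting position $p \in \{0, 1, \dots, i-j\}$ at which the occurrence of $b$ begins, a prefix walk of length $p$ ending at $b^0$, and a suffix walk of length $i-j-p$ starting at $b^j$. By approximate regularity, the number of walks of length $\ell$ ending (or starting) at any fixed node is at most $(\psi_k D)^\ell$. Multiplying and summing over $p$ and over the two orientations of $b$ gives at most $2(i-j+1)(\psi_k D)^{i-j}$ such extensions per blockade.

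Summing across all blockades $b \in \bee_j$ for $j = 1, \dots, i$, and using the hypothesis that $|\bee_j|$ is at most a $\phi_k$ fraction of the total number of $j$-walks (and hence at most $\phi_k \cdot n(\psi_k D)^j$), the number of $i$-walks blocked by $\bee_{\le i}$ is at most
\[
\sum_{j=1}^{i} \phi_k \cdot n(\psi_k D)^j \cdot 2(i-j+1)(\psi_k D)^{i-j} \;\le\; 2 \phi_k (k+1)^2 \cdot n(\psi_k D)^i.
\]
Choosing $\phi_k$ to be a sufficiently small constant depending only on $k$ (say of order $\psi_k^{-k}/k^2$) makes this at most $\tfrac{1}{2} n D^i$, and subtracting from the total yields $|W_i| \ge \tfrac{1}{2} n D^i = n\Theta(D)^i$, as required.

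The argument presents no serious obstacle; the only point requiring care is the bookkeeping of constants so that a single $\phi_k$ works uniformly for every $i \le k$ and so that the $\psi_k^i$ slack from approximate regularity is absorbed into $\phi_k$ (possibly forcing $\phi_k = 2^{-\Theta(k^2)}$, which is still a positive constant in the fixed-$k$ regime).
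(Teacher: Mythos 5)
Your proof is essentially the same as the paper's: both compute that the total number of $i$-walks is $n\Theta(D)^i$ by approximate regularity, then union-bound the number of blocked walks over all blockades $b \in \bee_{\le i}$ by counting extensions of $b$ to a full $i$-walk, and finally choose $\phi_k$ small enough (e.g.\ of order $\psi_k^{-k}$) so that blocked walks are at most half the total. The only cosmetic difference is the bookkeeping of extensions---you fix the starting position of $b$ and count prefixes/suffixes, giving a factor $2(i-j+1)(\psi_k D)^{i-j}$, while the paper extends edge by edge at either end, giving the looser $(2\psi_k D)^{i-j}$---but both are easily absorbed into $\phi_k$.
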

\begin{proof}
First note that we have
$$ nD^i \le |W_i \cup X_i| \le n(\psi_kD)^i$$
\emph{total} walks in $H$ (recall that by our regularization procedure, we can assume that all vertices have degree in $[D, \psi_k D]$).
We now upper bound $X_i$.
Each $x \in X_i$ has a subpath $x \supseteq b \in \bee$.
We may thus upper bound $X_i$ by the following (somewhat loose) estimate:
$$|X_i| \le \sum \limits_{j=2}^{i} |\bee_j| \cdot (2\psi_kD)^{i-j},$$
since each blocked path $x \in X_i$ may be obtained via $i - |b|$ extensions of some blockade $b \in \bee_{\le i}$ by adding an edge to its front or back, and there are up to $2\psi_k D$ ways to extend any given walk by $1$ edge on either end.
Since $|\bee_j| \le \phi_k n(\psi_k D)^j$  we have
$$|X_i| \le \sum \limits_{j=2}^{i} n \phi_k \cdot (2)^{i-j} (\psi_k D)^i \le n\phi_k (2\psi_k D)^i.$$
Hence, choosing $\phi_k \le \frac{1}{(4\psi_k)^k}$ (this is overkill), we have
$$|X_i| \le n (D/2)^i$$
And so
$$|W_i| \ge |W_i \cup X_i| - |X_i| \ge nD^i - n(D/2)^i = n\Theta(D)^i$$
and the lemma follows.
\end{proof}

With this, we can prove:
\begin{lemma} [Generalization of Lemma \ref{lem:4cyc-cs}] \label{lem:kcyc-cs}
If $|W_k| = n \Theta(D)^k$, then $|E_H| = O_k\left(n\left|M_k\right|^{1/(2k)}\right)$.
\end{lemma}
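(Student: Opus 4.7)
The plan is to follow the exact template of Lemma \ref{lem:4cyc-cs}, but with unblocked $k$-walks in place of all $2$-walks and applied to $k$-walk meets in place of closed $4$-walks. The only essential input we need beyond the hypothesis is that $|E_H| = \Theta(nD)$, which holds because we assumed $H$ is approximately regular with all degrees in $[D, \psi_k D]$.

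First I would let $k_{u,v} := |W_k[u \leadsto v]|$ denote the number of unblocked $k$-walks from $u$ to $v$. By the definition of $M_k$ as the set of ordered pairs of unblocked $k$-walks sharing endpoints, we get exactly
\[
|M_k| \;=\; \sum_{(u,v) \in V \times V} k_{u,v}^{2}.
\]
Next, applying the Cauchy--Schwarz inequality to the $n^2$ terms in the sum and using the hypothesis $\sum_{(u,v)} k_{u,v} = |W_k| = n \, \Theta(D)^k$, I would derive
\[
|M_k| \;\ge\; \frac{\bigl(\sum_{(u,v)} k_{u,v}\bigr)^2}{n^{2}} \;=\; \frac{\bigl(n\,\Theta(D)^{k}\bigr)^{2}}{n^{2}} \;=\; \Theta(D)^{2k}.
\]
Taking $2k$-th roots and multiplying by $n$ gives $n|M_k|^{1/(2k)} \ge n \cdot \Theta(D) = \Theta(|E_H|)$, absorbing the $k$-dependent constants (from $\Theta(D)^k$) into the $O_k(\cdot)$ on the right-hand side.

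There is no serious obstacle here: the argument is essentially a direct Cauchy--Schwarz computation, and the introduction of blockades does not interfere with it because the hypothesis $|W_k| = n\Theta(D)^k$ is exactly what Lemma \ref{lem:unblocked-count} is set up to provide. The one point to be careful about is tracking the $2^{O(k)}$ factors: since our regularization only guarantees degrees in $[D, \psi_k D]$ with $\psi_k = 2^{O(k)}$, the hidden constants in $\Theta(D)^k$ blow up exponentially in $k$, but this is consistent with the $O_k(\cdot)$ notation in the statement and matches the $2^{O(k)}$ slack already advertised in Theorem \ref{thm:upper-bounds}.
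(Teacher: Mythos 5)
Your proof is correct and follows the same Cauchy--Schwarz argument as the paper: write $|M_k|$ as $\sum_{(u,v)} |W_k[u\leadsto v]|^2$, apply Cauchy--Schwarz to get $|M_k| \ge |W_k|^2/n^2 = \Theta(D)^{2k}$, and rearrange using $|E_H| = \Theta(nD)$. Your remark on tracking the $\psi_k = 2^{O(k)}$ factor is accurate and consistent with the paper's $O_k$ convention.
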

\begin{proof}
The proof is essentially identical to that of Lemma \ref{lem:4cyc-cs}.
We have 
\begin{align*}
\left|M_k \right| &= \sum \limits_{(u, v) \in V \times V} \left|W_k [u \leadsto v]\right|^2\\
&\ge \frac{ \left(\sum \limits_{(u, v) \in V \times V} \left|W_k [u \leadsto v]\right| \right)^2 }{n^2} & \text{Cauchy-Schwarz Inequality}\\
&= \frac{ \left(n\Theta(D)^k \right)^2 }{n^2}\\
&= \Theta(D)^{2k}
\end{align*}
and so
\begin{align*}
\left|M_k\right|^{1/(2k)} &\ge D\\
n\left|M_k\right|^{1/(2k)} &\ge nD\\
n\left|M_k\right|^{1/(2k)} &= \Omega_k\left(|E_H|\right) \\
\end{align*}
which implies the lemma.
\end{proof}

\subsection{Upper Bound for $|M_k|$ (Generalization of Lemma \ref{lem:4cyc-completion})}
We now begin to work towards a generalization of Lemma \ref{lem:4cyc-completion}.
The argument will be inductive in nature.
Lemma \ref{lem:4cyc-completion} will essentially serve as the base case, and the inductive step is split across the next two lemmas.

\begin{lemma} \label{lem:meet-count}
Suppose that $\left|W_j[u \leadsto v]\right| = O_k\left(f^{j-1}\right)$ for all $u, v$ and all $j < i$ for some fixed $i \le k$.
Then
$$\left|M_i \right| =O_k\left(|E_H| \cdot (fD)^{i-1}\right).$$
\end{lemma}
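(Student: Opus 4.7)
The plan is to adapt the strategy of Lemma \ref{lem:4cyc-completion} from the base case $i=2$ to general $i$ using the inductive hypothesis. For each unblocked $i$-walk meet $(w_1, w_2)$ with common endpoints $u, v$, I would form the closed $2i$-walk $w_1 \overline{w_2}$ and identify the edge $e = (x, y)$ added last among its $2i$ edges by the greedy algorithm. By the protection witnessed when $e$ was added, there is a fault set $F$ of size at most $f$ hitting every $(2k-1)$-walk from $x$ to $y$ in $H^{(x,y)}$; since $2i - 1 \le 2k - 1$, the walk $p$ of length $2i-1$ from $y$ to $x$ obtained by deleting $e$ from the closed walk must contain some vertex $c \in F$.

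Next I would case-split on where $c$ lands within $p$. Writing $w_1 = a e b$ with $|a| = s-1$ and $|b| = i-s$, we have $p = b\, \overline{w_2}\, a$. In the primary case where $c$ lies in the $\overline{w_2}$ segment, $c$ splits $w_2 = w_2' w_2''$ into sub-walks of lengths $t_w$ and $i - t_w$, both strictly less than $i$. Since sub-walks of unblocked walks are unblocked, the meet decomposes into four unblocked walks $a, b, w_2', w_2''$, each of length less than $i$, so the IH applies to each factor.

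The key technical step is a mixed counting argument. With $(e, s, c, t_w)$ fixed, I would bound the number of unblocked meets in this case by
\[
\Bigl(\sum_u |W_{s-1}[u \leadsto x]| \cdot |W_{t_w}[u \leadsto c]|\Bigr) \cdot \Bigl(\sum_v |W_{i-t_w}[c \leadsto v]| \cdot |W_{i-s}[y \leadsto v]|\Bigr),
\]
then apply the IH $O_k(f^{\ell-1})$ to the two factors whose non-summed endpoint is the fixed vertex $c$, while bounding the other two factors by the trivial degree-sum bounds $\sum_u |W_{s-1}[u \leadsto x]| \le O_k(D^{s-1})$ and $\sum_v |W_{i-s}[y \leadsto v]| \le O_k(D^{i-s})$ (obtained by enumerating walks backward from $x$ or forward from $y$ in the approximately regular graph $H$). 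The product telescopes to $O_k(f^{i-2} D^{i-1})$ meets per $(e, s, c, t_w)$, and summing over the $O_k(|E_H| \cdot f)$ such quadruples (the $O_k(\cdot)$ absorbs constant factors in $i$ and $s$) gives the target bound $O_k(|E_H| \cdot (fD)^{i-1})$.

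The hard part will be the secondary case, where $c$ lies on $a$ or $b$ rather than on $\overline{w_2}$. Here the fault vertex refines the split of $w_1$ but leaves $w_2$ as an unblocked walk of length exactly $i$, for which no IH bound is directly available. I expect the resolution to come from either (i) extending IH by one edge to obtain $|W_i[u \leadsto v]| \le O_k(D \cdot f^{i-2})$ and pairing it with the extra constraint on $(a, b)$ imposed by $c$'s location on $w_1$, or (ii) applying the protection condition again to the last-added edge \emph{within} $w_2$ to extract a second fault vertex guaranteed to lie on $w_2$. Pushing either variant to yield the matching $O_k(f^{i-2} D^{i-1})$ bound, without losing a $D/f$ factor in the regime $D \gg f$, is the delicate balancing act that the paper alludes to in its overview.
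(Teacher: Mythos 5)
Your setup and primary case are correct and follow the same high-level plan as the paper: identify the last-added edge $e=(x,y)$ of the closed $2i$-walk $w_1\overline{w_2}$, invoke the protection condition witnessed when $e$ was added to find a fault vertex $c$ on the length-$(2i-1)$ walk from $y$ to $x$, and case on where $c$ lands. Your bilinear counting for the primary case (inductive hypothesis on the two $c$-anchored pieces, degree sums on the remaining two) yields $O_k(f^{i-2}D^{i-1})$ per fixed $(e,s,c,t_w)$, which matches the paper's case~(b); the paper gets there via a prefix/suffix split of $w_1\overline{w_2}$ rather than a product of two sums, but the estimates are equivalent.

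The secondary case is a genuine gap, and neither of your two speculations matches the paper's resolution. The paper's case~(a) (covering all fault-vertex shifts $s_c \le i+1$, which in particular includes $c$ anywhere on $w_1$) takes the counted prefix $P$ to consist of $(i+1)$-walks rather than $i$-walks --- one node longer than $w_1$, covering positions $0,\dots,i+1$ of the closed walk. Because $s_c\le i+1$, this extended prefix contains all three anchors $x,y,c$, so it has only $i-1$ free positions and $|P|=O_k(D^{i-1})$; and the complementary suffix $q$ over positions $i+1,\dots,2i$ has length $i-1<i$ with $\overline{q}$ a subwalk of the unblocked $w_2$, so the inductive hypothesis $|W_{i-1}[\cdot]|=O_k(f^{i-2})$ applies directly. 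Your option (i) leaves $w_2$ as an intact length-$i$ walk; pushing $|W_i[u\leadsto v]|=O_k(Df^{i-2})$ through the four-piece factorization produces exponent profiles of the form $D^{1+t+i-s}f^{i+s-t-4}$ whose $D$-versus-$f$ balance depends on the shift parameters. One can salvage this by noting the excess factor is always $(f/D)$ to a nonnegative power and invoking the WLOG assumption $f < D$ (legitimate, since if $D\le f$ then $nD \le n\cdot n^{1/k}f^{1-1/k}$ and the theorem is trivial), but you did not carry this out, and it is not the paper's route. Option (ii) has no evident execution. The missing idea to internalize is: absorb one node of $w_2$ into the counted prefix so that the prefix holds all three anchors (recovering the factor of $D$ that fixing $c$ buys), while the suffix shrinks to length $i-1$, where the inductive hypothesis applies cleanly.
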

\begin{proof}
For a given $i$-walk meet $(w_1, w_2)$, we will say that the \emph{shift} $s_x$ of a node $x \in w_1 \cup w_2$ is its first position in $w_1 \overline{w_2}$ (so $0 \le s_x < 2i$).
Similarly, the shift $s_{(u, v)}$ of an edge $(u, v)$ is the first position of this edge in $w_1 \overline{w_2}$ (where the first edge in $w_1 \overline{w_2}$ is indexed from $0$).

As before, when we choose to add any $(u, v)$ to $H$, it is not $(2k-1, f)$ protected in $H^{(u, v)}$ and so there is a set $F$ of $|F| \le f$ nodes (with $u, v \notin F$) such that every $u \leadsto v$ walk of length $2i-1$ intersects some node $x \in F$.
Note that each $(w_1, w_2) \in M_i$ that is completed by the addition of the edge $(u, v)$ to $H$ corresponds to a $u \leadsto v$ walk of length $2i-1$ for $i \leq k$ (obtained by joining the common start/endpoint of $w_1 \overline{w_2}$ to create a \emph{circular} walk, and then removing the edge $(u, v)$ from this walk).
Thus $x \in w_1 \cup w_2$ for some $x \in F$.
Our proof strategy is: we fix a shift $s_{(u, v)}$ for the edge $(u, v)$, we fix a node $x \in F$ and a shift value $s_x$ for $x$, and we will show that the addition of the edge $(u, v)$ to $H^{(u, v)}$ completes only $O_k\left(f^{i-2} D^{i-1}\right)$ $i$-walk meets including the node $x$ in which $(u, v), x$ have their prescribed shift values.
Since there are $O_k(f)$ possible choices of $x \in F, 0 \le s_x, s_{(u, v)} \le 2k$, this implies
$$\left|M_i\right| = O_k\left(|E_H| \cdot (fD)^{i-1}\right)$$
by a simple union bound over $(u, v) \in E_H$ and over the possible choices of $x, s_x, s_{(u, v)}$.

In the rest of this proof we will assume $0 \le s_{(u, v)} \le i-1$, and so $(u, v) \in w_1$ (we also assume without loss of generality that $u$ precedes $v$ in $w_1$).
The proof for the remaining case $i \le s_{(u, v)} < 2i$ is identical with the roles of $w_1, w_2$ swapped.
We will denote by $M'_i$ the subset of $M_i$ satisfying the criteria mentioned thus far (i.e. with $(u, v)$ at shift $s_{(u, v)}$ and $x$ at shift $s_x$).
We split into two cases, depending on the value of $s_x$.

\paragraph{(a) Suppose $0 \le s_x \le i+1$.}
Let $P$ be the set of $i+1$-walks in $H^{(u, v)}$ that include node $u$ in position $s_{(u, v)}$, node $v$ in position $s_{(u, v)} + 1$, and node $x$ in position $s_x$ ($P$ can even include blocked walks).
We next count $|P|$.
Supposing $s_x > s_{(u, v)}$, after fixing the node $u$ in position $s_{(u, v)}$, there are $(\psi_k D)^{s_{(u, v)}}$ ways to choose a prefix for the walk, and there are $(\psi_k D)^{i - 1 - s_{(u, v)}}$ ways to choose a suffix for the walk, since $v, x$ have fixed positions following $u$.
Hence $|P| = (\psi_k D)^{s_{(u, v)}} \cdot (\psi_k D)^{i - 1 - s_{(u, v)}} = O_k(D^{i-1})$. 
If instead we have $s_x < s_{(u, v)}$ then the argument and bound are identical (with a factor of $\psi_k D$ shifted from the prefix to the suffix).

For each $p = \left(p^0, \ldots, p^{i+1}\right) \in P$, let $Q^p$ be the set of suffixes such that for each $q \in Q^p$, we have $qp = w_1\overline{w_2}$ for some $(w_1, w_2) \in M'_i$.
We now count $\left|Q^p\right|$.
Note that each $q \in Q^p$ has the form $q = w_1 \overline{w_2} \setminus p$, and so $\overline{q}$ is an $i-1$-subwalk of $w_2$; since $w_2$ is unblocked by definition of $i$-walk meets, it follows that $q$ is unblocked as well.
Thus $Q^p \subseteq W_{i-1}[p^{i+1} \leadsto p^0]$, and so
$$\left| Q^p \right| \le \left| W_{i-1}[p^{i+1} \leadsto p^0] \right| = O_k\left(f^{i-2}\right)$$
We then have
$$\left|M'_i\right| = \sum \limits_{p \in P} \left| Q^p \right| \le |P| \cdot O_k\left(f^{i-2}\right) = O_k\left(f^{i-2} D^{i-1}\right)$$
as claimed.

\paragraph{(b) Instead suppose $i+2 \le s_x < 2i$.}
This implies that $x \notin w_1$ so $x \in w_2$.
The proof is now only a slight tweak on the previous case.
We define $P$ to be the set of $i$-walks (not $i+1$ walks as before) in $H^{(u, v)}$ that have $u$ in position $s_{(u, v)}$ and $v$ in position $s_{(u, v)} + 1$.
By an identical counting argument as in the previous case, we have $|P| = O_k(D^{i-1})$.
As before, for each fixed $p = \left(p^0, \dots, p^{i}\right) \in P$ we define $Q^p$ as the set of suffixes such that for each $q \in Q^p$, we have $qp = w_1 \overline{w_2}$ for some $(w_1, w_2) \in M_i$, and our goal is to count $\left|Q^p\right|$.
Note that we then have $\overline{q} = w_2$ and so $q$ is unblocked.
Unlike before, we have $x \in q$ so we may split each $q \in Q^p$ into two subpaths over $q$.
Specifically, let
$$Q_1 := \left\{q_1 \subseteq q \ \mid \ q_1^0 = p^i, q_1^{s_x - i} = x, q \in Q^p\right\}$$
and
$$Q_2 := \left\{q_2 \subseteq q \ \mid \ q_2^0 = x, q_2^{2i - s_x} = p^0, q \in Q^p\right\}$$
(in other words, for any $q \in Q^p$, we split $q$ over the node $x$ and add its prefix to $Q_1$ and its suffix to $Q_2$).
Since each $q \in Q^p$ is unblocked, we also have that each $q_1 \in Q_1, q_2 \in Q_2$ is unblocked.
Thus $Q_1 \subseteq W_{s_x - i}[p^i \leadsto x]$ and $Q_2 \subseteq W_{2i - s_x}[x \leadsto p^0]$.
We then have
$$\left| Q^p \right| \le \left|Q_1 \right| \cdot \left|Q_2 \right| \le O_k\left(f^{s_x - i - 1}\right) \cdot O_k\left(f^{2i - s_x - 1}\right) = O_k\left(f^{i-2}\right)$$
and so, as before,
$$\left|M'_i\right| = \sum \limits_{p \in P} \left|Q^p \right| = |P| \cdot O_k\left(f^{i-2}\right) = O_k \left( f^{i-2} D^{i-1} \right)$$
as claimed.
\end{proof}

We next show:
\begin{lemma} \label{lem:building-blocks}
Let $2 \le i \le k$.
Suppose that
$$\left|M_i^{\bee_{\le i-1}}\right| = O_k\left(|E_H| \cdot (fD)^{i-1}\right).$$
Then there exists a set $\bee_i$ of $i$-length blockades such that
$$\left| W_i^{\bee_{\le i-1} \cup \bee_i} [u \leadsto v] \right| = O_k\left(f^{i-1}\right) \qquad \text{for all } u, v \in V$$ 
and $\bee_i$ contains at most a $\phi_k$ fraction of all $i$-walks in $H$, for any desired constant $\phi_k > 0$ depending only on $k$.
\end{lemma}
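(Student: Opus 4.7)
The plan is to construct $\bee_i$ by a direct greedy blocking procedure. Fix a threshold $T := C_k f^{i-1}$, where $C_k$ is a constant depending only on $k$ and $\phi_k$, to be pinned down later. For each ordered pair $(u,v)$ with $N_{u,v} := |W_i^{\bee_{\le i-1}}[u \leadsto v]| > T$, arbitrarily choose $N_{u,v} - T$ walks from $W_i^{\bee_{\le i-1}}[u \leadsto v]$ and add them to $\bee_i$; for pairs with $N_{u,v} \le T$ I would do nothing. This immediately delivers the per-pair guarantee in the statement: $|W_i^{\bee_{\le i-1} \cup \bee_i}[u \leadsto v]| \le T = O_k(f^{i-1})$ for every $u, v$, matching the claimed bound.

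The task is then to show that the total number of walks blocked this way is at most a $\phi_k$-fraction of all $i$-walks in $H$, and here I would use the hypothesis of the lemma via a Markov-type inequality. Since $|M_i^{\bee_{\le i-1}}| = \sum_{(u,v) \in V \times V} N_{u,v}^2$, and for each pair contributing to $\bee_i$ we have $N_{u,v} > T$ and hence $N_{u,v} - T \le N_{u,v} \le N_{u,v}^2 / T$, summing gives
\[
|\bee_i| \;\le\; \sum_{(u,v)\,:\,N_{u,v} > T} \frac{N_{u,v}^2}{T} \;\le\; \frac{|M_i^{\bee_{\le i-1}}|}{T} \;=\; \frac{O_k(|E_H|\cdot (fD)^{i-1})}{C_k f^{i-1}} \;=\; \frac{O_k(nD^i)}{C_k},
\]
using $|E_H| = \Theta(nD)$ from the approximate regularity of $H$. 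Since the total number of $i$-walks in $H$ is at least $nD^i$, choosing $C_k$ sufficiently large as a function of $\phi_k$ and the hidden $O_k(\cdot)$ constant makes $|\bee_i|$ at most a $\phi_k$-fraction of this total, yielding the second requirement.

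The only real ``obstacle'' is that the constants must line up, and the clean match is exactly what makes the argument go through: the hypothesis expresses $|M_i^{\bee_{\le i-1}}|$ as $|E_H| \cdot (fD)^{i-1}$, and the natural per-pair threshold $f^{i-1}$ converts this bound into something proportional to $nD^i$, i.e.\ to the total $i$-walk count, leaving enough slack to pay for the blockade budget no matter how small the constant $\phi_k$ is. No induction on $i$ is invoked here; this lemma is a standalone tool that converts an aggregate $M_i$ bound into a per-pair $W_i$ bound at the price of a controllable blockade budget, and can then be fed back into Lemma \ref{lem:meet-count} at the next level of the outer induction.
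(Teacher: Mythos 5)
Your proof is correct. You take a modestly different route from the paper: the paper constructs $\bee_i$ greedily (repeatedly pick the pair $(u,v)$ maximizing $|W_i^{\bee_{\le i-1}\cup\bee_i}[u\leadsto v]|$, block one walk, stop once the $\phi_k$-fraction budget is exhausted) and then argues by contradiction that if some pair still had too many unblocked walks, each greedy step must have decreased $|M_i|$ by $\omega_k(f^{i-1})$, so running for $\phi_k\, n\,\Theta_k(D)^i$ steps would cost more than the assumed $O_k(|E_H|(fD)^{i-1})$. You instead define $\bee_i$ non-adaptively by capping each pair at an explicit threshold $T = C_k f^{i-1}$ and bound the total excess via $N_{u,v}-T \le N_{u,v}^2/T$, giving $|\bee_i| \le |M_i^{\bee_{\le i-1}}|/T$, which is $O_k(nD^i)/C_k$ once you insert the hypothesis and $|E_H|=\Theta_k(nD)$. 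The underlying accounting is the same (``excess walks per pair'' $\lesssim$ ``pairwise walk-count squared'' $/$ ``threshold''), but your direct Markov-style calculation is cleaner: it avoids the paper's slightly awkward use of $\omega_k(\cdot)$ notation for a quantity in a single fixed graph and makes the dependence of the blockade budget on the constants $C_k,\phi_k$ explicit, at no loss of generality. One small point worth making explicit if this were inserted into the paper: when you add a walk $b$ to $\bee_i$ for the ordered pair $(u,v)$, its reverse $\overline{b}$ is also blocked, so the walks chosen for $(u,v)$ simultaneously handle $(v,u)$; either iterate over unordered pairs or note this, since otherwise one is implicitly adding each walk twice (which is harmless since it only changes the constant).
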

\begin{proof}
We construct $\bee_i$ iteratively as follows: initially $\bee_i = \emptyset$; we then repeatedly choose the node pair $u, v$ that maximizes $\left| W_i^{\bee_{\le i-1} \cup \bee_i} [u \leadsto v] \right|$ (ties may be broken arbitrarily), choose any walk $b \in  W_i^{\bee_{\le i-1} \cup \bee_i} [u \leadsto v]$, and add $b$ to $\bee_i$ (note that the sets $W_i^{\bee_{\le i-1} \cup \bee_i} [u \leadsto v]$ shrink throughout this process as we grow $\bee_i$).
Repeat until no more walks may be added to $\bee_i$ without destroying the property that $\bee_i$ contains at most a $\phi_k$ fraction of all $i$-walks in $H$.

We now argue that $\bee_i$ satisfies the lemma.
Observe that $\left|M_i[s \leadsto t]\right| = \left|W_i[s \leadsto t]\right|^2$ for any $s, t \in V$.
Suppose towards a contradiction that $\bee_i$ does \emph{not} satisfy the lemma; that is,
$$\left| W_i^{\bee_{\le i-1} \cup \bee_i} [u \leadsto v] \right| = \omega_k\left(f^{i-1}\right) \qquad \text{for some } u, v \in V.$$
Then while we iteratively build $\bee_i$, each time we choose a node pair $s, t$ that maximizes\\$\left| W_i^{\bee_{\le i-1} \cup \bee_i} [s \leadsto t] \right|$, we must have
$$\left| W_i^{\bee_{\le i-1} \cup \bee_i} [s \leadsto t] \right| = \omega_k\left(f^{i-1}\right).$$

Thus, when we add some $b \in W_i^{\bee_{\le i-1} \cup \bee_i} [s \leadsto t]$ to $\bee_i$, the size of $\left(\left|W_i[s \leadsto t]\right|-1\right)^2$ falls from $\left|W_i[s \leadsto t]\right|^2$ to
$$\left(\left|W_i[s \leadsto t]\right|-1\right)^2 = \left|W_i[s \leadsto t] \right|^2 - \Omega\left(\left|W_i[s \leadsto t]\right|\right) = \left|W_i[s \leadsto t] \right|^2 - \omega_k\left(f^{i-1}\right)$$
and so the total number of $i$-walk meets in $H$ falls by an additive $\omega_k\left(f^{i-1}\right)$ term in each iteration of building $\bee_i$.
By Lemma \ref{lem:unblocked-count} there are $n\Theta_k(D)^i$ total $i$-walks in $H$, so
$$\left| \bee_i \right| = \phi_k \cdot n\Theta_k(D)^i = n\Theta_k(D)^i.$$
Hence the number of $i$-walk meets falls by
$$\omega_k \left(f^{i-1} \right) \cdot n\Theta_k(D)^i = \omega_k\left( |E_H| \cdot (fD)^{i-1}\right)$$
from start to finish of the process of building $\bee_i$ (since $|E_H| = n\Theta_k(D)$).
This is a contradiction, since we assumed that initially $\left|M_i^{\bee_{\le i-1}}\right| = O_k\left( |E_H| \cdot (fD)^{i-1}\right)$.
Thus $\bee_i$ satisfies the lemma statement.
\end{proof}

\begin{lemma} [Generalization of Lemma \ref{lem:4cyc-completion}] \label{lem:kcyc-completion}
There is a blockade set $\bee$ for which
$$\left|M_k\right| = O_k\left( \left| E_H \right| (fD)^{k-1} \right)$$
and for all $i \le k$, the set $\bee_i$ contains at most a $\phi_k$ fraction of all $i$-walks in $H$, for any desired constant $\phi_k > 0$ depending only on $k$.
\end{lemma}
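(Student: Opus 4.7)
The plan is to build the blockade set $\bee$ one length-class at a time by interlacing Lemmas \ref{lem:meet-count} and \ref{lem:building-blocks}, maintaining an inductive invariant that keeps both of their hypotheses satisfied. The final output will be $\bee = \bee_2 \cup \cdots \cup \bee_{k-1}$, together with the trivial choices $\bee_1 = \bee_k = \emptyset$; the density condition demanded on $\bee_k$ then holds vacuously, while $\bee_1 = \emptyset$ is the only reasonable choice since $1$-walks are just edges.

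The induction runs over $i = 2, 3, \ldots, k$, maintaining the invariant that for each $2 \le j \le i-1$ we have constructed $\bee_j$, which (a) contains at most a $\phi_k$ fraction of all $j$-walks in $H$, and (b) satisfies $|W_j^{\bee_{\le j}}[u \leadsto v]| = O_k(f^{j-1})$ for every pair $u, v \in V$. At stage $i$, I would first observe that since $\bee_{\le j} \subseteq \bee_{\le i-1}$ whenever $j < i$, enlarging the blockade set can only shrink the set of unblocked walks, so the invariant transfers to $|W_j^{\bee_{\le i-1}}[u \leadsto v]| = O_k(f^{j-1})$ for all $j < i$; the remaining case $j=1$ is automatic since there is at most one edge between any two nodes. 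Lemma \ref{lem:meet-count}, applied with blockade set $\bee_{\le i-1}$, then yields $|M_i^{\bee_{\le i-1}}| = O_k(|E_H| \cdot (fD)^{i-1})$. When $i < k$, this is precisely the hypothesis of Lemma \ref{lem:building-blocks}, which produces $\bee_i$ extending the invariant to index $j = i$; when $i = k$, the inequality $|M_k^{\bee_{\le k-1}}| = O_k(|E_H|(fD)^{k-1})$ is already the conclusion of the present lemma, and no further blockade construction is needed.

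There is no single hard step; the technical payload already resides in Lemmas \ref{lem:meet-count} and \ref{lem:building-blocks}, which together act as the inductive step and the boosting mechanism. The only point that demands care is the bookkeeping over which blockade set is ``in force'' at each stage, since Lemma \ref{lem:meet-count} is stated relative to a single implicit $\bee$. The monotonicity observation above, namely that unblocked-walk bounds proved under a smaller blockade set survive verbatim under any superset, resolves this cleanly. One minor quantitative check is that the constant $\phi_k$ chosen in Lemma \ref{lem:building-blocks} at each of the $k-2$ inductive iterations can be taken small enough that the union $\bee = \bigcup_{j=2}^{k-1} \bee_j$ still satisfies the per-length density constraint required by Lemma \ref{lem:unblocked-count}; this is automatic because the constraint is imposed one length at a time rather than on the union as a whole.
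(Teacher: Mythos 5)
Your proposal is correct and follows essentially the same interlacing induction as the paper: alternate applications of Lemma \ref{lem:meet-count} and Lemma \ref{lem:building-blocks}, with the density bound on each $\bee_i$ maintained per-length. The one minor difference is that you dispatch the $i=2$ base case by invoking Lemma \ref{lem:meet-count} directly (with the trivial hypothesis $|W_1[u\leadsto v]|=O(1)$), whereas the paper instead appeals to Lemma \ref{lem:4cyc-completion} and then converts closed $4$-walks to $2$-walk meets; both are valid, and yours is arguably a slight streamlining.
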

\begin{proof}
We will inductively show that two statements hold for all $2 \le i \le k$:
\begin{itemize}
\item (The lemma statement) There is a blockade set $\bee_{\le i-1}$ (with maximum blockade length $i-1$) such that (1) $\bee_j$ contains at most a $\phi_k$ fraction of all $j$-walks in $H$ for all $j \leq i-1$, and (2) $\left|M_i^{\bee_{\leq}i-1}\right| = O_k\left(\left|E_H\right| (fD)^{i-1} \right)$.

\item (An auxiliary statement) We have $\left| W_j^{\bee} [u \leadsto v] \right| = O_k\left(f^{j-1}\right)$ for all $u, v \in V$ and $j \le i-1$.
\end{itemize}

We first argue the base case $i=2$.
For the first condition (lemma statement), we appeal to Lemma \ref{lem:4cyc-completion} to argue that the bound holds even when $\bee_{\le 1} = \emptyset$.
Specifically, note that without blockades, each closed $4$-walk in $H$ corresponds in the natural way to $\Theta(1)$ $i$-walk meets.
Thus
$$\left|M_2^{\emptyset}\right| = O\left( \left| E_H \right| \cdot fD \right).$$
We also note that the auxiliary statement holds trivially when $i=2$; it simply states that there is at most $1$ edge between any two nodes.

Now we argue the inductive step, assuming that both of the above properties hold for $i$.
The first part of the induction (the lemma statement) coincides with the premise of Lemma \ref{lem:building-blocks}, so we may find a set $\bee_i$ as in the conclusion of Lemma \ref{lem:building-blocks}.
That is, we have
$$\left| W_i^{\bee_{\leq i-1} \cup \bee_i} [u \leadsto v] \right| = O_k\left(f^{i-1}\right) \qquad \text{for all } u, v \in V.$$
This statement (together with the auxiliary inductive hypothesis) proves the auxiliary statement for the case $i+1$.
The auxiliary statement for the case $i+1$ also coincides with the premise of Lemma \ref{lem:meet-count}; applying this lemma, we have
$$\left|M_{i+1}^{\bee_{\leq i-1} \cup \bee_i}\right| = O_k\left( \left| E_H \right| \cdot \left(fD\right)^i \right).$$
This is the lemma statement for the case $i+1$, which completes the inductive hypothesis.
Hence the first part of our inductive hypothesis (the lemma statement) holds in the case $i=k$, which completes the proof. 
\end{proof}

\subsection{Proof of Theorem \ref{thm:upper-bounds}}

It is now a matter of algebra to complete the proof of Theorem \ref{thm:upper-bounds}.
\begin{proof} [Proof of Theorem \ref{thm:upper-bounds}]
Let $\bee$ be a blockade set as in Lemma \ref{lem:kcyc-completion}.
By Lemma \ref{lem:kcyc-completion}, we have
$$\left|M_k\right| = O_k\left( \left| E_H \right| (fD)^{k-1} \right).$$
Additionally, since $\bee_i$ has at most a $\phi_k$ fraction of all $i$-walks for each $i$, we have $\left|W_i\right| \ge n\Theta(D)^k$ (by Lemma \ref{lem:unblocked-count}) and so
$$\left|E_H\right| = O_k\left( n\left|M_k\right|^{1/(2k)} \right)$$
by Lemma \ref{lem:kcyc-cs}.
Combining these, we compute
\begin{align*}
\left| E_H \right| &= O_k \left( n \left( \left|E_H \right| \left(fD\right)^{k-1} \right)^{1/(2k)} \right)\\
\left| E_H \right|^{2k-1} &= O_k \left( n^{2k} \left(fD\right)^{k-1} \right)\\
\left| E_H \right|^k &= O_k \left( n^{k+1} f^{k-1} \right) & \text{since } \left|E_H\right| = \Theta_k(nD)\\
\left| E_H \right| &= O_k \left( n^{1 + 1/k} f^{1 - 1/k} \right)
\end{align*}
as claimed.
\end{proof}

 \section{Lower Bounds for Larger $k$} \label{app:lbs}
 
 We now prove our lower bounds in full generality.   
 
 \subsection{Proof of Theorem~\ref{thm:vertex-lbs}}
 
Let $G$ be a graph from the girth conjecture (Conjecture~\ref{conj:erdos}), i.e., a graph with girth at least $2k+2$ and $\Omega(n^{1+1/k})$ edges.  We construct a new graph $G'$ as follows.  Let $t = \lceil f/2 \rceil$.  We set $V_{G'} = V(G) \times [t]$, and let $E_{G'} = \{ \{(u,i), (v,j)\} : \{u,v\} \in E(G) \land i,j \in [t]\}$.  Let $G' = (V_{G'}, E_{G'})$.  Intuitively, we can think of $G'$ as being obtained by replacing each vertex of $G$ by a set of $t$ copies of the vertex, and each edge of $G$ is replaced by a complete bipartite graph between the two sets of copies.  We will prove Theorem~\ref{thm:vertex-lbs} by proving that the only $f$ VFT $(2k-1)$-spanner of $G'$ is itself, and that it has the required number of edges.

We first claim that the only $f$ VFT $(2k-1)$-spanner of $G'$ is $G'$ itself.  To see this, suppose that $H$ is a subgraph of $G'$ which does not contain some edge $\{(u,i), (v,j)\} \in E_{G'}$.  Let $F = \{(u,\ell) : \ell \neq i\} \cup \{(v, \ell) : \ell \neq j\}$, i.e., we let the fault set be all copies of $u$ except for $(u,i)$ and all copies of $v$ except $(v,j)$.  Note that $|F| \leq f$.  Now consider the shortest path from $(u,i)$ to $(v,j)$ in $H \setminus F$.  Let this path be $(u,i) = (x^0, i^0), (x^1, i^1), (x^2, i^2), \dots, (x^p, i^p) = (v,j)$.  Note that for any $0 \leq a \leq p-1$, it cannot be the case that $x^a = u$ and $x^{a+1} = v$, since no such edges exist in $H \setminus F$.  Thus $\langle u = x^0, x^1, \dots, x^p = v$ is a (possibly non-simple) path from $u$ to $v$ in $G$ which does not use the edge $\{u,v\}$.  By adding $\{u,v\}$ to this path, we get a cycle of length at most $p+1$.  Since $G$ has girth at least $2k+2$, this implies that $p \geq 2k+1$, and thus that in $H \setminus F$ the distance between $(u,i)$ and $(v,j)$ is at least $2k+1$, while in $G' \setminus F$ they are at distance $1$.  Thus $H$ is not an $f$ VFT $(2k-1)$-spanner of $G'$, and hence the only $f$ VFT $(2k-1)$-spanner of $G'$ is $G'$ itself.

%

So $G'$ is the only $f$ VFT $(2k-1)$-spanner of itself, and it remains only to analyze its size.  Clearly $|V_{G'}| = t |V(G)|$, and by assumption $|E(G)| \geq \Omega(|V(G)|^{1+1/k})$.  Thus
\begin{align*}
|E_{G'}| &= |E(G)| t^2 = \Omega(t^2 |V(G)|^{1+1/k}) = \Omega\left(t^2 \cdot \left(\frac{|V_{G'}|}{t}\right)^{1+1/k}\right) \\
&= \Omega\left(f^{1 - 1/k} |V_{G'}|^{1+1/k}\right). 
\end{align*}

\subsection{Proof of Theorem~\ref{thm:edge-lbs}}

Our proof is very similar to the proof of Theorem~\ref{thm:vertex-lbs}.  The case of $k=2$ was proved in Section~\ref{sec:lower-2}, so we only prove the theorem for $k \geq 3$, where we are forced to use a slightly different construction which will give only the weaker bound claimed in Theorem~\ref{thm:edge-lbs}.

Let $t = \lfloor \sqrt{f} \rfloor$.  We set $V_{G'} = V(G) \times [t]$, and as before set $E_{G'} = \{ \{(u,i), (v,j)\} : \{u,v\} \in E(G) \land i,j \in [t]\}$ (where $G$ is a graph from the girth conjecture).  Let $H$ be a subgraph of $G'$ missing some edge $e = \{(u,i), (v,j)\}$.  As in the other lower bounds, we will prove that $H$ cannot be an $f$ EFT $(2k-1)$ spanner of $G'$, and thus the only $f$ EFT $(2k-1)$-spanner of $G'$ is $G'$ itself.  

Consider the fault set $F = \{ \{ (u, i'), (v, j') \} : i' \neq i \lor j' \neq j\}$.  In other words, we remove all edges between copies of $u$ and copies of $v$ \emph{except} for the edge $e$ (which is therefore in $G' \setminus F$ but not in $H \setminus F$).  Now  let $P = \langle (u,i) = (x^0, i^0), (x^1, i^1), (x^2, i^2), \dots, (x^p, i^p) = (v,j) \rangle$ be the shortest path from $(u,i)$ to $(v,j)$ in $H \setminus F$.  Note that for any $0 \leq a \leq p-1$, it cannot be the case that $x^a = u$ and $x^{a+1} = v$, since no such edges exist in $H \setminus F$.  Thus $\langle u = x^0, x^1, \dots, x^p = v \rangle$ is a (possibly non-simple) path from $u$ to $v$ in $G$ which does not use the edge $\{u,v\}$.  By adding $\{u,v\}$ to this path, we get a cycle of length at most $p+1$.  Since $G$ has girth at least $2k+2$, this implies that $p \geq 2k+1$, and thus that in $H \setminus F$ the distance between $(u,i)$ and $(v,j)$ is at least $2k+1$, while in $G' \setminus F$ they are at distance $1$.  Thus $H$ is not an $f$ EFT $(2k-1)$-spanner of $G'$, and hence the only $f$ EFT $(2k-1)$-spanner of $G'$ is $G'$ itself.

It remains only to analyze the size of $G'$.  By construction, $|V_{G'}| = t |V(G)|$, and by assumption $|E(G)| \geq \Omega(|V(G)|^{1+1/k})$.  Thus
\begin{align*}
|E_{G'}| &= |E(G)| t^2 = \Omega(|V(G)|^{1+1/k} t^2) = \Omega\left(t^2 \cdot \left(\frac{|V_{G'}|}{t}\right)^{1+1/k}\right) \\
&= \Omega\left(t^{1 - 1/k} |V_{G'}|^{1+1/k}\right) = \Omega\left(f^{1/2 - 1/(2k)} |V_{G'}|^{1+1/k}\right). 
\end{align*}

\subsection{Strong Incompressibility}

We now show how to generalize the above proofs to obtain strong incompressibility (Theorems \ref{thm:vertex-incomp} and \ref{thm:edge-incomp}).
This type of argument was first used by Matou{\v{s}}ek \cite{Mat96}.

The proof of Theorem \ref{thm:vertex-incomp} (incompressibility under vertex faults) goes as follows.
Let $G^* = (V, E^*)$ be the graph described in the proof of Theorem \ref{thm:vertex-lbs}.
There are $2^{|E^*|}$ possible subgraphs of $G^*$ on the same vertex set.
Consider two distinct such subgraphs $G_1 = (V, E_1) \ne G_2 = (V, E_2)$, let $\dee_1$ be the data structure created by processing $G_1$, let $\dee_2$ be the data structure created by processing $G_2$, and let $((u, i), (v, j)) \in E_1 \setminus E_2$.
As before, let
$$F := \left\{(u, \ell) : \ell \ne i \right\} \cup \left\{(v, \ell) : \ell \ne j\right\}.$$
By the previous argument, we then have
$$\dist_{G_1 \setminus F}((u, i), (v, j)) = 1 \quad \text{ and } \quad \dist_{G_2 \setminus F}((u, i), (v, j)) \ge 2k+1$$
and so
$$\widehat{\dist_{G_1 \setminus F}}((u, i), (v, j)) \le 2k-1 \quad \text{ and } \quad \widehat{\dist_{G_2 \setminus F}}((u, i), (v, j)) \ge 2k+1.$$
Thus $\dee_1, \dee_2$ produce different answers to the query $((u,i),(v,j),F)$, so they must have different representations.
Since this holds for any two subgraphs $G_1, G_2 \subseteq G^*$, we have a family of $2^{|E^*|}$ graphs that all have different representations, and so by the pigeonhole principle the data structure $\dee_G$ for one of these subgraphs $G \subseteq G^*$ occupies at least $|E^*| = \Omega\left(f^{1 - 1/k} n^{1 + 1/k}\right)$ bits of space, thus completing the proof.

The proof of Theorem \ref{thm:edge-incomp} is identical, taking $F$ to be a set of edge faults as described in the above proof of Theorem \ref{thm:edge-lbs}.

\end{document}